\newcommand*{\blue}{\color{black}}
\newtheorem{lemma}{Lemma}
\newtheorem{proposition}{Proposition}
\newtheorem{remark}{Remark}
\long\def\symbolfootnote[#1]#2{\begingroup
\def\thefootnote{\fnsymbol{footnote}}
\footnote[#1]{#2}\endgroup}
\begin{document}
\title{Energy Management and Trajectory Optimization for UAV-Enabled Legitimate Monitoring Systems}

\author{Shuyan Hu, Qingqing Wu,~\IEEEmembership{Member, IEEE},
and Xin Wang,~\IEEEmembership{Senior Member, IEEE}
\thanks{Work in this paper was supported by the National Natural Science Foundation of China under Grant No. 61671154,
the Shanghai Science Foundation under Grant No. 18ZR1402700,
and the Innovation Program of Shanghai Municipal Education Commission.}
\thanks{S. Hu is with the State Key Laboratory of ASIC and System, the School of Information Science and Technology, Fudan University, Shanghai 200433, China (e-mail: syhu14@fudan.edu.cn).

Q. Wu is with the Department of Electrical and Computer Engineering, National University of Singapore, Singapore 119077
(e-mail: elewuqq@nus.edu.sg).

X. Wang is with the State Key Laboratory of ASIC and System, the Shanghai Institute for Advanced Communication and Data Science, the Department of Communication Science and Engineering, Fudan University, Shanghai 200433, China
(e-mail: xwang11@fudan.edu.cn).
}}

\maketitle

\begin{abstract}
Thanks to their quick placement and high flexibility, unmanned aerial vehicles (UAVs) can be very useful in the
{\blue current and future wireless communication systems.}
With a growing number of smart devices and infrastructure-free communication networks,
it is necessary to legitimately monitor these networks to prevent crimes.
In this paper, a novel framework is proposed to exploit the flexibility of the UAV for legitimate monitoring
via joint trajectory design and energy management.
The system includes a suspicious transmission link with a terrestrial transmitter and a terrestrial receiver,
and a UAV to monitor the suspicious link.
The UAV can adjust its positions and send jamming signal to the suspicious receiver to ensure successful eavesdropping.
Based on this model, we first develop an approach to minimize the overall jamming energy consumption of the UAV.
Building on a judicious (re-)formulation, an alternating optimization approach is developed
to compute a locally optimal solution in polynomial time.
Furthermore, we model and include the propulsion power to minimize the overall energy consumption of the UAV.
Leveraging the successive convex approximation method,
an effective iterative approach is developed to find a feasible solution fulfilling the Karush-Kuhn-Tucker (KKT) conditions.
Extensive numerical results are provided to verify the merits of the proposed schemes.

\end{abstract}

\begin{IEEEkeywords}
Legitimate mornitoring, energy management, solar energy harvesting, alternating optimization,
successive convex approximation.
\end{IEEEkeywords}

\section{Introduction}

Featuring high flexibility, swift deployment, and wide coverage,
unmanned aerial vehicles (UAVs) have been extensively applied to activities such as
search and rescue in disaster areas, inspection of landscapes, and surveillance of forrest fires.
Recently, UAVs have found many use cases in wireless communication networks as cost-effective and on-demand aerial wireless platforms for areas without cellular coverage~\cite{Zeng, wu19, zeng19},
{\blue or as flying mobile users within a cellular network~\cite{survey1, survey2}.
Cellular-connected UAVs can enhance connectivity, coverage, flexibility and reliability
of wireless communication networks~\cite{survey1, survey2}.
}
The UAVs are anticipated to engage significantly in the fifth-generation (5G) and beyond 5G (B5G) wireless networks,
and provide new services such as real-time image transmission \cite{mot17}, caching and multicasting \cite{xiaoli, zeng18},
data dissemination or collection \cite{Zeng, wu18mar, wu18dec}, mobile relaying and edge computing \cite{kli16, zeng16, yang19},
and wireless power transfer \cite{moza17, xu18, wu19}.

As the applications of the internet-of-things (IoT) continue to expand in the
{\blue current and future wireless networks,}
many infrastructure-free wireless links (such as bluetooth, Wi-Fi, and UAV-enabled transmission) have been established
to support communications among IoT devices.
Yet, these convenient networks can be abused for crimes and terrorism, if in the wrong hands.
Therefore, it is necessary for authorized parties to surveil these suspicious communication links
(see \cite{huang18, zeng2016, jie2018, xu17, xu17surv, cai17, hu17, haiquan19, moon19, wu19oct}).
Optimization metrics for legitimate monitoring typically focused on maximizing the eavesdropping rate
or the non-outage probability \cite{huang18}.
Spoofing schemes were proposed for a malicious transmission link
to maximize the eavesdropping rate \cite{zeng2016}, or to intervene and change the communicated data \cite{jie2018}.
For a suspicious communication link in \cite{xu17}, the largest achievable monitoring non-outage probability
and comparative intercepting rate were obtained under delay-sensitive and delay-tolerant scenarios, respectively.
Proactive jamming schemes were developed to maximize the average monitoring rate for
multi-input multi-output (MIMO) channels \cite{cai17}, relay networks \cite{hu17}, UAV-aided links \cite{haiquan19},
and with a deep-learning approach \cite{moon19}.

Most existing works considered a fixed ground node (GN) as the legitimate monitor,
whose channel typically suffers from severe large-scale path loss and small-scale fading.
Yet, the UAV-enabled monitor can enjoy high-probability of line-of-sight (LoS) channels
as its flying altitude rises.
It is therefore easier for the UAV to obtain its channel gains with the GNs if their locations are known.
Thanks to its flexibility, 
the UAV can dynamically adjust its positions
for better eavesdropping rate, e.g., by flying closer to the suspicious transmitter.
Categorized by the power sources, 
there are two types of UAVs, namely, the tethered UAVs and the untethered UAVs \cite{qq18}.
A tethered UAV is linked with a ground control platform,
and is powered stably through a cable or a wire.
The lack of mobility has constrained tethered UAVs to a targeted area only \cite{yali16, yang17, lyu17, alze17}.
In particular, the horizontal positions of the UAVs were optimized in \cite{lyu17} to cover a set of GNs with the least possible
number of UAVs.
The optimal three-dimensional (3D) deployment scheme of a UAV was developed in \cite{alze17} to
cover as many GNs as possible with a minimum transmit power budget.

By contrast, untethered UAVs are powered by laser-beam, on-board battery, and/or solar panel.
They can fly freely and enjoy full mobility in wide 3D space.
Communication throughput was maximized for a laser-powered UAV in \cite{ouyang18}.
In spite of their flexibility, the battery-powered UAVs have to revisit their home base repeatedly to refill their batteries during operations,
due to the limited capacity of on-board batteries \cite{derrick}.
The optimal UAV trajectory and power management schemes were developed in \cite{zeng16}
to obtain the largest achievable data rate of a relaying system,
and in \cite{zeng18} to minimize the data dissemination time of a multicasting system.
Since solar panels at the UAVs can harness and convert energy to electric power, supporting long endurance flights,
solar-powered UAVs have also received great research interests.
The optimal 3D trajectory optimization and resource assignment for a solar-powered UAV-aided communication system
were developed in \cite{derrick} to achieve the largest overall data rate in a fixed time horizon.

Apart from transmit power, the UAVs consume additional propulsion power to support hovering and moving activities.
As a result, the energy management for UAV-enabled communications noticeably differs
from that in current systems on the ground.
The largest value of energy efficiency in bits/Joule was obtained in \cite{zeng17} for a fixed-wing UAV via trajectory optimization.
Total (including communication and propulsion) energy usage of a rotary-wing UAV was minimized in \cite{zyong}
to satisfy the throughput requirement of each GN.

In this paper, we propose a simple model for a rotary-wing UAV enabled monitoring system.
The suspicious transmission link on the ground consists one source (transmitting) node S and one destination (receiving) node D.
When the UAV's channel condition is worse than that of node D,
the UAV sends jamming signal to the latter as noise to degrade its channel for successful eavesdropping.
The total jamming energy consumption of the UAV is minimized in a finite period via joint trajectory optimization and power allocation,
based on the assumption of successful eavesdropping at each slot.
By judicious reformulation, we transform this non-convex optimization task into two separable subproblems,
each of which is convex when the other set of variables are fixed.
The alternating optimization method is leveraged to develop an efficient approach that is ensured to
converge to a locally optimal solution.
Based on such a solution, some useful insights are also drawn on the changing patterns of the UAV's trajectory and jamming policy.
To achieve energy-efficient UAV operations in practice,
we further consider a solar-powered rotary-wing UAV enabled monitoring system by including
the propulsion power consumption besides the jamming power.
Capitalizing on the successive convex approximation (SCA) method,
an efficient iterative approach is put forth to find a feasible solution fulfilling the Karush-Kuhn-Tucker (KKT) conditions.
Numerical results demonstrate that with UAV trajectory optimization, the overall energy consumption can be greatly suppressed.

The rest of the paper is organized as follows.
Section \ref{sec:model} describes the system models.
Section \ref{sec:jam} develops an approach to the UAV trajectory design and jamming energy minimization,
while Section \ref{sec.energy} addresses the trajectory design and total (jamming and propulsion) energy management
for a solar-powered UAV.
Numerical results are provided in Section \ref{sec.sim}.
The paper is concluded in Section~\ref{sec.con}.

\begin{figure}[t]
\centering
\includegraphics[width=0.6\textwidth]{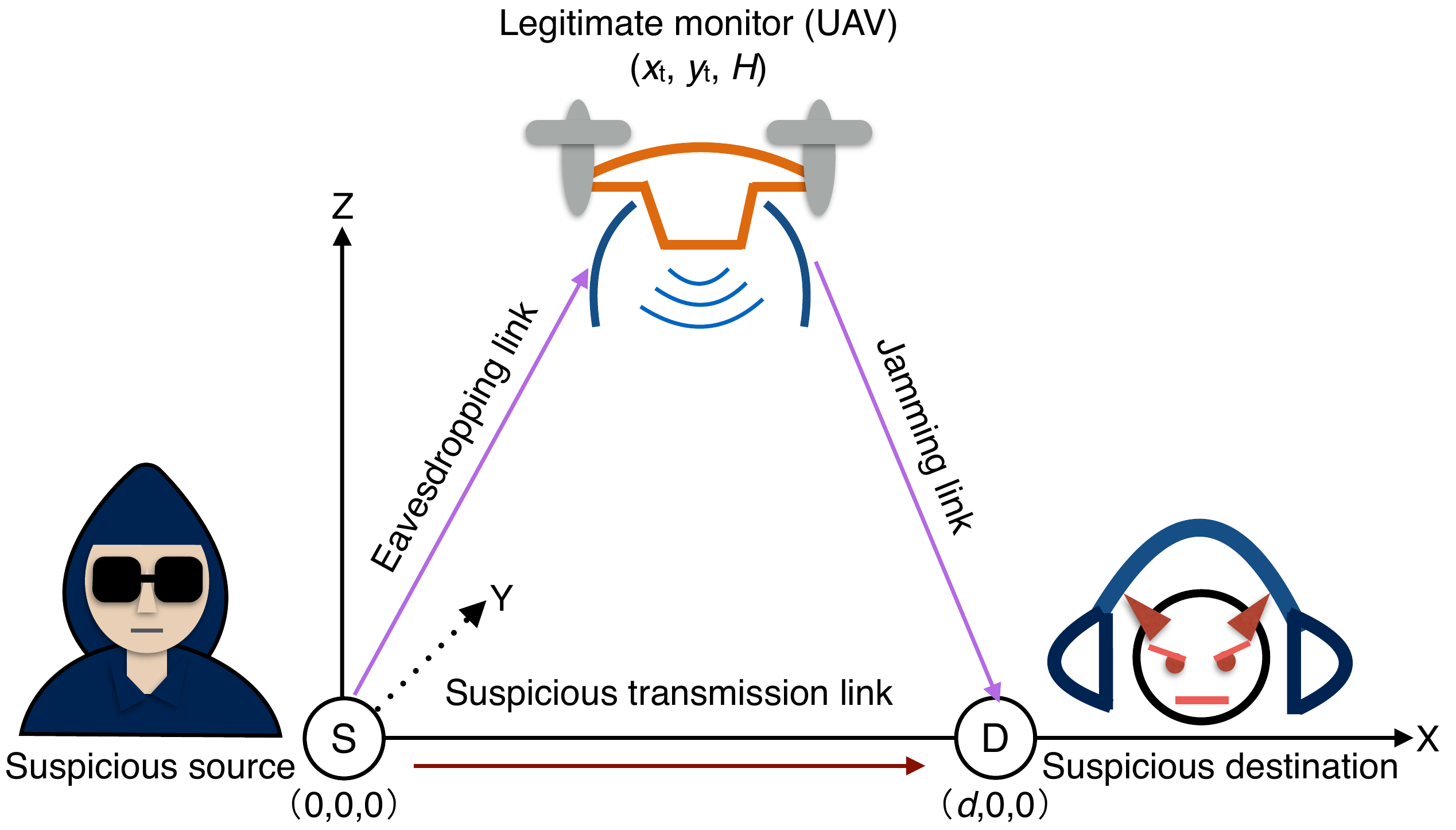}
\caption{A UAV-enabled legitimate monitoring system.}
\label{syst}
\end{figure}

\section{System Models}\label{sec:model}
Consider a point-to-point, frequency non-selective wireless communication link from a suspicious source node S
to a suspicious destination node D which are geographically set apart by $d$ meters on the ground.
An untethered UAV, traveling at a fixed altitude of $H$ meters,
serves as the legitimate monitor to eavesdrop this link;
see Fig. \ref{syst}.\footnote{\blue{Although design freedom can be increased by further optimizing UAV's altitude,
energy consumption as well as risks of instability and collision will rise. Therefore, rather than frequently adjusting altitude,
it may be better for the UAV to fly at a fixed altitude and avoid vertical movement due to airspace regulation,
collision avoidance, energy saving and safety concerns.}}
The UAV can move forward horizontally or hover in the air.
It can travel in the vicinity above the GNs to improve its eavesdropping performance.
Suspicious nodes S and D have one antenna each, and the UAV operates with two antennas,
one for monitoring and intercepting information from the S-D link (receiving)
and the other for sending jamming signals to node D (transmitting).
Therefore the UAV can perform in a full-duplex state to jam and monitor simultaneously.
Since its initial and final locations are given,
the UAV's channel power gain can be worse than that of D at certain time.
In this case, the UAV sends jamming signal to the latter as noise to degrade its channel for successful eavesdropping.
We assume that the UAV can completely annul its self-interference from the transmitting antenna to the receiving antenna
by adopting state-of-the-art analog and digital self-interference cancelation schemes \cite{xu17}.

\subsection{UAV Mobility Model}\label{sec:uavmob}
Without loss of generality, we consider a 3D Cartesian coordinate system with nodes S and D
located at $(0,0,0)$ and $(d,0,0)$, respectively.
The UAV is deployed for the monitoring mission in a finite scheduling horizon of $T$ seconds.
We split the period $T$ into $T_w$ time slots given by ${\cal T}:= \{1, \ldots, T_w \}$;
the duration of each slot is the same as $\delta$.
The slot length is selected to be short enough so that the UAV can be treated as static within each slot.
Consequently, the time-varying coordinates of the UAV are given by $(x_t,y_t,H), \forall t\in{\cal T}$,
with $x_t$ and $y_t$ being the UAV's x- and y-coordinates over time, respectively.
The initial and final locations of the legitimate monitor are pre-defined and given by $(x_0,y_0,H)$, and $(x_T,y_T,H)$, respectively.
The minimum traveling distance for the UAV to finish during the scheduling horizon $T$ is thereby
$d_{\min}=\sqrt{(x_T-x_0)^2+(y_T-y_0)^2}$.
Given the maximum speed of the UAV $\tilde V_m$, we let $\tilde V_m \geq d_{\min}/T$ so that
at least one feasible trajectory can be found from the UAV's initial to final locations.\footnote{\blue{By considering the time for acceleration, the proposed maximum speed $\tilde V_m$ may be infeasible. However, in practice, the acceleration time could be very short and thus reasonably ignored, especially when the total flying period or distance is sufficiently long. From this perspective, we provide a lower bound for the maximum speed.}}

Consequently, the UAV's mobile activity constraints, including its initial and final locations and speed constraints are given by~\cite{zeng16}:
\begin{subequations}
\begin{align}
(x_1-x_0)^2+(y_1-y_0)^2 &\leq V_m^2 \label{eq.mob1} \\
\quad(x_{t+1}-x_t)^2+(y_{t+1}-y_t)^2 &\leq V_m^2, ~\forall t \in {\cal T} \label{eq.mob2} \\
(x_T-x_{T-1})^2+(y_T-y_{T-1})^2 &\leq V_m^2 \label{eq.mob3}
\end{align}
\end{subequations}
where $V_m := \tilde V_m \delta$ stands for the largest traveling distance of the UAV for each slot.

{\blue
\begin{remark}\textit {(The choice of $T_w$):}
In general, $T_w$ is chosen such that the UAV can be treated as (quasi-) static within each time slot, observed from the ground.
To guarantee a certain accuracy, the ratio of the largest traveling distance within each time slot $\tilde V_m \delta$
and the UAV altitude $H$ can be restricted below a threshold, i.e., $\tilde V_m \delta/H \le \varepsilon_m$,
where $\varepsilon_m$ is the given threshold and $\delta = T/T_w$.
Then, the minimum number of time slots required for achieving the accuracy with a given $\varepsilon_m$ can be obtained as
$T_w \ge \tilde V_m T /(H \varepsilon_m)$.
The optimization gets more precise with more discretized time samples, i.e., larger value of $T_w$.
Yet, the computational complexity, given by $\mathcal{O}(T_w^{3.5})$, also increases significantly with the value of $T_w$.
Therefore, the number of time slots $T_w$ can be properly chosen in practice to balance between the accuracy and complexity~\cite{wu18mar}.
\end{remark}
}

\subsection{Communication Channel Model}

Malicious users of infrastructure-free wireless communication networks are more likely to appear in wide rural areas,
where surveillance is overlooked.
In open rural areas, the buildings and trees are sparsely distributed.
LoS channels can be dominant even for communications between GNs.
Therefore, we can suppose that the communication links between S, D, and UAV (i.e., node U) are all dominated by LoS channels,
which can facilitate analysis on the structural properties of the optimal solution.
The case with non-LoS channels will be accordingly addressed later.
We further suppose that the Doppler effect resulted from the UAV's mobile activities is completely neutralized \cite{zeng16, lin2018}.
The distance between S and D is fixed during the entire scheduling horizon, i.e., $d_{SD} = d$ meters.
Hence, the channel power gain of the suspicious link from S to D is constant and can be expressed as
\begin{equation}
h_{0}=\frac{\beta_0}{{d_{SD}}^2}=\frac{\beta_0}{d^2}
\end{equation}
where $\beta_0$ stands for the channel power at the reference distance $d_0=1$ meter.
At each slot $t$, the channel power gain from S to U for legitimate eavesdropping follows the LoS model as
\begin{equation}
h_{1}^t=\frac{\beta_0}{{d_{1}^{t}}^2}=\frac{\beta_0}{x_t^2+y_t^2+H^2}, ~~\forall t \in {\cal T}
\end{equation}
where $d_{1}^t= \sqrt{x_t^2+y_t^2+H^2}$ is the link distance between S and U at slot $t$.
Similarly, the channel power gain from U to D for jamming is
\begin{equation}
h_{2}^t=\frac{\beta_0}{{d_{2}^{t}}^2}=\frac{\beta_0}{(d-x_t)^2+y_t^2+H^2}, ~~\forall t \in {\cal T}
\end{equation}
where $d_{2}^{t}=\sqrt{(d-x_t)^2+y_t^2+H^2}$ is the separation distance between U and D at slot $t$.

Let $P_x^t$ stand for the transmit power by S at time slot $t$,
and $P_j^t$ the jamming power from U to D to interfere the channel at the suspicious receiver for a successful eavesdropping.
Clearly, the signal-to-interference-plus-noise ratio (SINR) at the suspicious receiver D is
\begin{equation}\label{snrd}
\gamma_{D}^t= \frac{h_{0}P_x^t}{h_{2}^tP_j^t+\sigma^2}, ~~\forall t \in {\cal T}
\end{equation}
where $\sigma^2$ is variance of the additive white Gaussian noise (AWGN).
On the other hand, the UAV can completely annul its self-interference from its jamming antenna to its receiving antenna.
Hence, the SINR (which in fact reduces to signal-to-noise ratio, SNR) 
of the legitimate eavesdropping channel at U is
\begin{equation}\label{snru}
\gamma_{U}^t= \frac{h_{1}^tP_x^t}{\sigma^2}, ~~\forall t \in {\cal T}.
\end{equation}

Successful eavesdropping at the UAV requires $\gamma_U^t \geq \gamma_D^t$.
The UAV can achieve this goal by dynamically adjusting its trajectory to fly close to the source node,
and/or adjusting its jamming power to reduce the channel gain of the suspicious receiver D at each time slot,
when the channel condition of the UAV is worse than that of D.\footnote{\blue{When the channel condition of the legitimate monitor (the S-U link) is better than that of the suspicious receiver
(the S-D link), eavesdropping is performed successfully without the UAV sending jamming signals to the receiver.
However, when the S-U link suffers a worse channel condition than the S-D link,
successful eavesdropping can be enabled through letting the UAV send jamming signals to the receiver to degrade its channel condition.
}}

Note that the assumption of successful eavesdropping at each time slot is tenable and non-trivial.
In fact, malicious users of infrastructure-free wireless communication networks can also develop counter-eavesdropping measures
to ensure secure transmissions on their behalf.
One important method is to transmit secret information in cipher.
In order to learn the pattern and decode the secret information, the legitimate agency treasures every bit of information.
In this case, the cipher transmitted in each time slot is of equal importance for the legitimate agency
to piece together the whole picture.
Hence, it is of paramount importance if the eavesdropper can intercept information from the suspicious link successfully in every time slot.

{\blue
\begin{remark}\textit {(Decoding the intercepted information):}
In this paper, we aim to investigate the fundamental performance limits of the physical layer approach for eavesdropping,
and thus do not consider encryption for the suspicious link, which is a higher layer technique and can be resolved as long as there are powerful computing resources. 
On the other hand, to avoid being monitored and tracked by legitimate parties, the suspicious link is very likely built, used and discarded
or destroyed in a day,
which makes the link not complete or mature enough in terms of software and hardware to preserve privacy and security.
We can thereby reasonably assume that the temporarily-established infrastructure-free suspicious link
is not vigilant against eavesdropping and does not employ
any countermeasures such as signal encryption or anti-surveillance detection.
From this perspective, the UAV can successfully decode the intercepted information from the suspicious link.
\end{remark}
}
\section{Legitimate Eavesdropping with Jamming}\label{sec:jam}

To ensure successful eavesdropping, the UAV may need to jam the transmission from S to D.
For an untethered UAV without incessant power supply, it is clear that we wish to minimize its overall jamming energy consumption.
Building on the UAV's mobile activity constraints \eqref{eq.mob1}--\eqref{eq.mob3},
together with the SINR expressions \eqref{snrd}--\eqref{snru},
the optimization task of interest can be formulated as
\begin{subequations}\label{p1}
\begin{align}
& \min_{\{P_j^t\}, \{x_t, y_t\}} \sum_{t \in {\cal T}} P_j^t \delta \label{p11}\\
&\text {s.t.} ~\frac{h_{0}P_x^t}{h_{2}^tP_j^t+\sigma^2} \leq \frac{h_{1}^tP_x^t}{\sigma^2}, ~\forall t  \label{p12}\\
&(x_1-x_0)^2+(y_1-y_0)^2\leq V_m^2 \label{p13}\\
&(x_{t+1}-x_t)^2+(y_{t+1}-y_t)^2\leq V_m^2, ~\forall t  \label{p14}\\
&(x_T-x_{T-1})^2+(y_T-y_{T-1})^2\leq V_m^2 \label{p15}\\
&P_j^t \geq 0, ~\forall t. \label{p16}
\end{align}
\end{subequations}

Here we in fact aim to pursue the optimal jamming policy and trajectory design for the UAV.
Note that the transmit power $P_x^t$ by S can be canceled from the both sides of the inequality constraints in \eqref{p12}.
This implies that the UAV does not need to know the $P_x^t$ when making its jamming and trajectory decisions.
This is of practical interest as the suspicious source is certainly reluctant to let the UAV know its transmit power value.

\subsection{Proposed Solution}
Problem \eqref{p1} is not a convex program because of the non-convex constraints in \eqref{p12};
hence, it cannot be dealt with by classic convex optimization methods.
To make the problem more tractable, we introduce two slack variables
$u_t := x_t^2 + y_t^2 + H^2$, and $w_t := (d-x_t)^2 + y_t^2 + H^2$,
and rewrite \eqref{p1} as
\begin{subequations}\label{p2}
\begin{align}
& \min_{\{P_j^t, u_t, w_t\}, \{x_t, y_t\}} \sum_{t \in {\cal T}} P_j^t \delta \label{p21}\\
&\text {s.t.} ~x_t^2 + y_t^2 + H^2 - u_t \leq 0, ~\forall t \label{p22}\\
& u_t - 2dx_t + d^2 - w_t \leq 0, ~\forall t \label{p23}\\
&\frac{u_t w_t}{d^2} - w_t - P_j^t \beta_0/ \sigma^2 \leq 0, ~\forall t  \label{p24}\\
&w_t \geq H^2, ~\forall t \label{p25}\\
&\eqref{p13} - \eqref{p16} \notag
\end{align}
\end{subequations}
where \eqref{p24} results from \eqref{p12} by the following step
\begin{equation}
\frac{h_{0}}{P_j^t \beta_0/w_t+\sigma^2} \leq \frac{\beta_0/u_t}{\sigma^2}, ~\forall t.  \label{eq.jam}
\end{equation}
Note that we change the ``='' signs to ``$\leq$'' signs in \eqref{p22} and \eqref{p23} to convexify those constraints.
It can be justified that upon obtaining the optimal solution for \eqref{p2},
constraints \eqref{p22} and \eqref{p23} should always be met with equality,
since otherwise, we can always decrease $u_t$ and $w_t$, respectively, to improve the channel condition of the corresponding eavesdropping and jamming link, leading to smaller total jamming energy consumption.
Therefore, problems \eqref{p1} and \eqref{p2} are equivalent.

Although problem \eqref{p2} is not convex, it is easy to see that the problem becomes 
convex with regard to $\{P_j^t, x_t, y_t, u_t\}$ for fixed $\{w_t \}$,
and it is also convex in $\{w_t \}$ for fixed $\{P_j^t, x_t, y_t, u_t\}$.
For this reason, we resort to the alternating optimization method (a.k.a. block coordinate descent) to solve \eqref{p2}.
The proposed algorithm is summarized in Algorithm \ref{algo:bcd}.
Since both subproblems are convex,
the globally optimal solution for each of them can be obtained by standard convex optimization solvers,
e.g., the interior point methods, in polynomial time \cite{Boyd}.
Clearly, the total jamming energy of UAV is bounded above zero.
For the proposed block coordinate descent method,
the resultant total jamming energy is decreased in each iteration.
Consequently, the proposed approach is ensured to converge to a locally optimal solution for problem \eqref{p2}.
As problems \eqref{p1} and \eqref{p2} are equivalent, a locally optimal solution for \eqref{p1} can be readily obtained.

\begin{algorithm}[t]
\caption{Alternating Optimization for Problem \eqref{p2}}
\label{algo:bcd}
\begin{algorithmic}[1]
\State {\bf Initialize} $\{P_j^t(0), x_t(0), y_t(0), u_t(0)\}$, and set initial feasible values of $\{w_t(0) \}$ for Problem \eqref{p2}.
\For {$m$ = 0, 1, 2, ...}
\State Obtain the optimal solution of $\{P_j^t(m+1), x_t(m+1), y_t(m+1), u_t(m+1)\}$ with $\{w_t(m) \}$ fixed.
\State Compute the optimal solution of $\{w_t(m+1)\}$ with $\{P_j^t(m+1), x_t(m+1), y_t(m+1), u_t(m+1)\}$ fixed. 
\State Update $m=m+1$.
\EndFor
\end{algorithmic}
\end{algorithm}

\subsection{Structural Properties}
To draw useful insights on the optimal trajectory optimization and jamming power allocation scheme, 
we analyze the structural properties of the optimal solution
for the UAV-aided eavesdropping system.

\begin{lemma}\label{lemma.free}
When the UAV is in the circular area of ${\cal A} :=\{(x_t, y_t) | \sqrt{x_t^2 + y_t^2 + H^2} \leq d, \forall t \}$,
it can eavesdrop successfully without jamming, i.e., $P_j^t=0, \forall t$.
\end{lemma}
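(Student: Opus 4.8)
The plan is to verify the successful-eavesdropping condition $\gamma_U^t \ge \gamma_D^t$ directly under the two hypotheses that the UAV lies in $\mathcal{A}$ and that it sends no jamming signal. First I would set $P_j^t = 0$ in the SINR expression \eqref{snrd}: the interference term vanishes and $\gamma_D^t$ collapses to $h_{0}P_x^t/\sigma^2$. Comparing this with $\gamma_U^t = h_{1}^tP_x^t/\sigma^2$ from \eqref{snru}, the requirement $\gamma_U^t \ge \gamma_D^t$ reduces — after cancelling the common strictly positive factor $P_x^t/\sigma^2$, which also shows the argument is independent of the (unknown) source power — to the channel-gain inequality $h_{1}^t \ge h_{0}$.

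Next I would substitute the LoS channel models $h_{1}^t = \beta_0/(x_t^2+y_t^2+H^2)$ and $h_{0} = \beta_0/d^2$. Since $\beta_0 > 0$ and both denominators are strictly positive, $h_{1}^t \ge h_{0}$ is equivalent to $x_t^2+y_t^2+H^2 \le d^2$, i.e.\ to $\sqrt{x_t^2+y_t^2+H^2} \le d$. This is exactly the defining condition of the region $\mathcal{A}$, so whenever $(x_t,y_t)\in\mathcal{A}$ the inequality $\gamma_U^t \ge \gamma_D^t$ holds with $P_j^t = 0$ for that slot, and applying this to every $t$ gives the claim. It is worth stating explicitly the monotonicity reading of this step: $h_{1}^t$ is strictly decreasing in the distance $d_1^t$ from S, so ``flying inside $\mathcal{A}$'' means precisely ``being close enough to S that the eavesdropping link already dominates the direct S--D link,'' which matches the discussion preceding the lemma.

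I do not anticipate a real obstacle here, as the core argument is a single algebraic manipulation once $P_j^t$ is zeroed out; the only care needed is in the sign/monotonicity bookkeeping. If one also wants the sharpness of the region, it follows from the same equivalence: since $P_j^t \ge 0$ can only decrease $\gamma_D^t$, the minimal jamming power compatible with $\gamma_U^t \ge \gamma_D^t$ is zero exactly on $\mathcal{A}$, and strictly positive jamming is unavoidable outside it — consistent with the reformulated constraint \eqref{p24}, where setting $P_j^t=0$ forces $u_t w_t/d^2 \le w_t$, i.e.\ $u_t \le d^2$.
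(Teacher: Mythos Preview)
Your proposal is correct and follows essentially the same approach as the paper: set $P_j^t=0$, compare $\gamma_U^t$ with $\gamma_D^t$, reduce to $h_1^t \ge h_0$, and then use the LoS channel expressions to see this is exactly the defining inequality of $\mathcal{A}$. Your write-up is in fact more explicit than the paper's own proof, and the additional sharpness remark (that strictly positive jamming is unavoidable outside $\mathcal{A}$) is a nice bonus consistent with the paper's subsequent discussion.
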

\begin{proof}
Lemma 1 can be proven through analyzing the characteristics of the transmit and eavesdropping rate.
When the UAV is in the circular area of ${\cal A} :=\{(x_t, y_t) | \sqrt{x_t^2 + y_t^2 + H^2} \leq d, ~\forall t \}$,
the quality of the channel from S to U ($h_1^t=\beta_0/(x_t^2+y_t^2+H^2)$) is the same as or better than 
that from S to D ($h_0=\beta_0/d^2$).
It then readily follows that the UAV can eavesdrop successfully without jamming.
\end{proof}

The circular area of ${\cal A}$ can be referred to as the jamming-free area.
When the UAV is out of the range of $\cal A$, the channel quality from S to U is worse than that from S to D.
In this case, the UAV can only eavesdrop successfully by degrading the SINR of the S-D link through jamming.
The amount of the jamming power at each time slot increases with the UAV's distance to S.

Based on Lemma \ref{lemma.free}, it can be inferred that when both the initial and final locations of the UAV are inside ${\cal A}$,
the optimal jamming policy is always zero, i.e., ${P_j^t}^* = 0, ~\forall t$.
As a result, the optimization problem \eqref{p1} reduces to find a feasible trajectory within the circular area
of ${\cal A}$ with $P_j^t = 0, ~\forall t$, i.e.,
\begin{equation}\label{p1easy}
\begin{aligned}
& \text{find}~ { \{x_t, y_t\}} \\
&\text {s.t.} ~x_t^2 + y_t^2 + H^2 \leq d^2, ~\forall t  \\
&\eqref{p13}-\eqref{p15}.
\end{aligned}
\end{equation}
Since problem \eqref{p1easy} is convex, a classic convex solver can be leveraged to obtain the optimal solution,
which is not necessarily unique.

\begin{lemma}\label{lemma.time}
When the scheduling horizon $T$ is larger than the minimum traveling time of the UAV $T_{\min}= d_{\min}/ \tilde V_m$,
the UAV will first fly towards the jamming-free area, then fly to its final location.
\end{lemma}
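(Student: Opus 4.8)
The plan is to study an optimal solution of the (equivalent) reformulation \eqref{p2} and to exploit the closed form that optimality forces on the jamming power. As argued after \eqref{eq.jam}, at any optimum \eqref{p22} and \eqref{p23} hold with equality, so $u_t=(d_1^t)^2$ and $w_t=(d_2^t)^2$; substituting into the (also tight, when binding) constraint \eqref{p24} gives
\[
P_j^t=\frac{\sigma^2}{\beta_0 d^2}\,\big((d-x_t)^2+y_t^2+H^2\big)\,\max\big\{x_t^2+y_t^2+H^2-d^2,\,0\big\}.
\]
Thus $P_j^t=0$ exactly when $(x_t,y_t)\in{\cal A}$, and otherwise $P_j^t>0$ and, as noted below Lemma~\ref{lemma.free}, grows with the distance $d_1^t$ from the UAV to~S. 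Minimizing $\sum_t P_j^t\delta$ therefore amounts to (i) keeping the UAV inside~${\cal A}$ for as many slots as possible and, (ii) on the remaining slots, keeping it as close to~S (equivalently, to~${\cal A}$, which is a disk lying directly above~S) as possible.

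Next I would show, by an exchange argument, that the slots in which the UAV lies in~${\cal A}$ may be taken to form a single contiguous block. If an optimal trajectory entered~${\cal A}$, left it, and re-entered, then --- ${\cal A}$ being convex --- the straight segment joining the two boundary crossings lies in~${\cal A}$ and is no longer than the original excursion; traversing it at constant speed uses steps of length at most the original average step, hence at most $V_m$, so \eqref{p13}--\eqref{p15} remain feasible while the corresponding $P_j^t$ drop to zero, strictly lowering the objective. Iterating, we may assume the UAV lies in~${\cal A}$ only on some (possibly empty) interval $\{t_1,\dots,t_2\}$, so its trajectory decomposes into an approach phase, a loiter phase inside~${\cal A}$, and a return phase (with the loiter phase degenerating when the slack is too small for the UAV to reach~${\cal A}$).

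It then remains to show that the approach and return phases are straight dashes at maximum speed toward, respectively away from, the nearest point of~${\cal A}$ --- which, since~${\cal A}$ is a disk centred directly above~S, lies on the line through~S. Using monotonicity (i)--(ii) together with the hypothesis $T>T_{\min}$ --- which provides strictly positive time slack, so that such a dash--loiter--dash schedule is feasible (and, when the slack is too small for the UAV to reach~${\cal A}$, forces it to dash as far toward~S as the budget permits before turning to the final point) --- one argues that any deviation from the straight approach leaves the UAV strictly farther from~S at some slot without shortening any travel leg, and so can be replaced by the straight dash without increasing, and generically strictly decreasing, $\sum_t P_j^t\delta$; the same reasoning applies to the return phase. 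This is exactly the asserted ``first fly toward the jamming-free area, then fly to the final location'' structure, the only part left unspecified being the path used during the loiter phase.

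The step I expect to be the main obstacle is this last one: making rigorous the claim that the straight radial path pointwise minimizes the jamming cost accumulated over a transit leg. While $P_j^t$ is monotone in $d_1^t$, it also depends on the UAV--D distance through the factor $(d-x_t)^2+y_t^2+H^2$, so a fully rigorous argument needs either to verify that $P_j$, regarded as a function of arclength along any admissible reparametrization of a transit path, is convex (so that a Jensen/rearrangement argument singles out the straight path), or to run a local KKT/perturbation argument on the transit subproblem using the explicit form of $P_j^t$ above. By comparison, the contiguity (exchange) argument and the feasibility bookkeeping under $T>T_{\min}$ are routine.
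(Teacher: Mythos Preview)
The paper does not actually prove Lemma~\ref{lemma.time}: immediately after the statement it simply remarks that the result ``is quite intuitive, as the UAV enjoys a better channel condition when it is closer to~S,'' and moves on. The closed-form expression for $P_j^t$ that you derive and use as the backbone of your argument is exactly the one the paper later writes down in the proof of Proposition~\ref{prop.jam}, so your starting point is consistent with the paper's own tools. Your contiguity step (the exchange argument using convexity of~${\cal A}$ and the triangle inequality to show re-entries can be eliminated) and the feasibility bookkeeping under $T>T_{\min}$ are sound and go well beyond the paper's one-line heuristic. Your honest flagging of the remaining obstacle --- that $P_j^t$ depends on both $d_1^t$ and $d_2^t$, so monotonicity in distance-to-S alone is not quite enough to force the radial dash --- is exactly the point at which a fully rigorous argument would need more work; the paper does not address it either. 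In short, your proposal is not merely a different route: it is essentially the only route sketched anywhere, and it is already more careful than what the paper offers.
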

Lemma \ref{lemma.time} is quite intuitive, as the UAV enjoys a better channel condition when it is closer to S.
Based on Lemma \ref{lemma.time}, we can further characterize the changing patterns of the UAV's jamming policy.

\begin{proposition}\label{prop.jam}
In general, the UAV's jamming power obeys the rule of first non-increasing then non-decreasing.
In some special cases, the jamming power either always non-increasing, or always non-decreasing.
\end{proposition}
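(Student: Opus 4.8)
The plan is to combine an explicit, position-dependent formula for the per-slot optimal jamming power with the two-phase trajectory structure supplied by Lemma~\ref{lemma.time}.

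First I would pin down ${P_j^t}^\star$. Arguing exactly as in the equivalence proof for \eqref{p1} and \eqref{p2}, at an optimum the constraints \eqref{p22} and \eqref{p23} are tight (so $u_t$ and $w_t$ are the squared S--U and U--D distances), and \eqref{p24} is tight at every slot with $P_j^t>0$ (otherwise $P_j^t$ could be reduced). Hence
\begin{equation}\label{eq.pjstar}
{P_j^t}^\star=\max\!\left\{\,0,\ \frac{\sigma^2}{\beta_0 d^2}\,w_t\,(u_t-d^2)\,\right\},\qquad \forall t\in{\cal T}.
\end{equation}
In particular ${P_j^t}^\star=0$ precisely when $u_t\le d^2$, i.e. when $(x_t,y_t)\in\mathcal A$, recovering Lemma~\ref{lemma.free}; and when the UAV lies outside $\mathcal A$ one checks that the right-hand side of \eqref{eq.pjstar} is \emph{increasing in the horizontal distance $\rho$ from the UAV to S along any ray through S}. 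This is seen by fixing the polar angle, writing $w_t(u_t-d^2)$ as a one-variable polynomial in $\rho$, and verifying that its derivative stays nonnegative on $\{u_t>d^2\}$ — the growth of the factor $(u_t-d^2)$ dominating the mild growth of $w_t$ caused by moving away from D. This monotonicity is the technical heart of the argument, and the step I expect to cost the most effort, since $w_t$ and $u_t-d^2$ move in opposite directions and one must show the net effect has a fixed sign.

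With \eqref{eq.pjstar} in hand I would invoke Lemma~\ref{lemma.time} (valid once $T>T_{\min}$). There the optimal trajectory splits into Phase~I, in which the UAV heads toward the jamming-free disk $\mathcal A$ — equivalently, since $\mathcal A$ is centred at S, it moves radially toward S, so its distance to S is non-increasing, possibly followed by an interval spent inside $\mathcal A$ — and Phase~II, in which it proceeds to the prescribed final location, during which (once it has left $\mathcal A$) its distance to S is non-decreasing. Feeding these monotonicities into the previous step, $\{{P_j^t}^\star\}$ is non-increasing over Phase~I (strictly decreasing while outside $\mathcal A$, then zero), identically zero on any loitering interval inside $\mathcal A$, and non-decreasing over Phase~II; hence the sequence is first non-increasing and then non-decreasing, which is the general claim.

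Finally I would read off the degenerate cases from where the two endpoints sit relative to $\mathcal A$, as in the discussion after Lemma~\ref{lemma.free}. If $(x_0,y_0)\in\mathcal A$, Phase~I never requires jamming, so $\{{P_j^t}^\star\}$ is non-decreasing throughout; if $(x_T,y_T)\in\mathcal A$, then by convexity of the disk $\mathcal A$ the final leg stays inside it, so Phase~II never requires jamming and $\{{P_j^t}^\star\}$ is non-increasing throughout; and if both endpoints lie in $\mathcal A$, then ${P_j^t}^\star\equiv0$. Besides the monotonicity computation flagged above, the only other point needing care is making precise the sense in which ``fly toward the jamming-free area'' in Lemma~\ref{lemma.time} produces, within each phase, a path along which the distance to S — and hence, by the first step, the jamming power — varies monotonically; this follows because $\mathcal A$ is a disk centred at S and the optimal legs are straight line segments.
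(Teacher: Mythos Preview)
Your plan is correct and mirrors the paper's proof almost exactly: both derive the closed form ${P_j^t}^\star=\max\{0,\frac{\sigma^2}{\beta_0 d^2}\,w_t(u_t-d^2)\}$, establish that it is monotone in the UAV's distance to S outside $\mathcal A$, and then split into the three endpoint cases you list. The only real difference is in how the monotonicity is checked: the paper fixes the particular slice $y_t^2=d^2-H^2$, computes $\partial{P_j^t}^\star/\partial x_t=x_t[(2x_t-3d/2)^2+7d^2/4]$, asserts the analogous conclusion in $y_t$, and infers growth with distance from S; you instead parametrize along rays through S and differentiate in the radial variable, which is arguably cleaner for pairing with Lemma~\ref{lemma.time} and avoids the jump from one slice to the whole exterior of $\mathcal A$. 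You are also more explicit than the paper about the caveat that ``fly toward $\mathcal A$'' must be read as producing monotone distance to S within each phase --- the paper simply takes this for granted.
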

\begin{proof}
When the UAV trajectory is fixed, \eqref{p1} reduces to a jamming energy minimization problem:
\begin{equation}\label{p1reduce}
\begin{aligned}
& \min_{\{P_j^t\}} \sum_{t \in {\cal T}} P_j^t  \delta \\
&\text {s.t.} ~\frac{h_{0}P_x^t}{h_{2}^tP_j^t+\sigma^2} \leq \frac{h_{1}^tP_x^t}{\sigma^2}, ~\forall t  \\
&P_j^t \geq 0, ~\forall t.
\end{aligned}
\end{equation}
For each time slot, the optimal solution of the jamming power is given by
${P_j^t}^* = \max\{0, \frac{\sigma^2}{h_2^t}(\frac{h_0}{h_1^t}-1)\}$,
where ${P_j^t}^* =0$ when the UAV is in the jamming-free area of $\cal A$,
and ${P_j^t}^* = \frac{\sigma^2}{h_2^t}(\frac{h_0}{h_1^t}-1) >0$ when the UAV is outside $\cal A$.
The latter can be rewritten into
\begin{equation}
{P_j^t}^* = \frac{\sigma^2}{\beta_0 d^2}[(d-x_t)^2+y_t^2 + H^2] [(x_t^2+y_t^2 + H^2)-d^2]
\end{equation}
where $x_t^2 + y_t^2 + H^2 \geq d^2$.
The projection of the jamming-free area on the ground is a circle centered at S (0,0), with the radius of $\sqrt{d^2-H^2}$.
To observe how ${P_j^t}^*$ changes with $x_t$ outside $\cal A$, we let $y_t^2 = d^2 - H^2$
and take the first-order partial derivative of ${P_j^t}^*$ over $x_t$:
\begin{equation}\label{eq.derix}
\partial {P_j^t}^* / \partial x_t = 4x_t^3 - 6dx_t^2 + 4d^2x_t=x_t[(2x_t-3d/2)^2+7d^2/4].
\end{equation}
Clearly, the optimal jamming power ${P_j^t}^*$ increases with $x_t$ when $x_t >0$, and decreases with it when $x_t <0$.
The same pattern can be drawn from ${P_j^t}^*$ with respect to $y_t$.
In one word, ${P_j^t}^*$ increases as the UAV flies away from S.

Now consider the following three cases.

Case i): Initial and final locations are both outside ${\cal A}$.
When the UAV's traveling time is abundant, i.e., $T>T_{\min}$, it always seeks the trajectory
that yields the least energy consumption.
Therefore, the UAV first flies towards ${\cal A}$, then to its final destination.
The jamming power experiences the process of first decreasing then increasing.
The same jamming policy applies when $T=T_{\min}$ and the line segment connecting the initial and final points
goes through ${\cal A}$.

Case ii): Initial (or final) location is inside (or outside) ${\cal A}$, or vice versa.
In the first scenario, the jamming power first decreases to zero,
then stays constant till the eavesdropping mission is accomplished.
The jamming power is always non-increasing.
If we switch the initial and final locations, the jamming power then experiences a non-decreasing process.

Case iii): Both the initial and final locations are inside ${\cal A}$.
The jamming power is always zero in this scenario.

Combining Cases i)--iii), the proposition follows.
\end{proof}

Proposition \ref{prop.jam} provides important insights on the optimal jamming policy of the UAV according to different
initial and final locations.
It shows that the UAV is willing to travel slowly inside the jamming-free area
and even take detours to reduce the jamming power consumption.
Such a strategy of the UAV is typically the consequence of minimizing the jamming energy only. 

{\blue
\begin{remark}\textit {(In and out of the jamming-free area):}
The UAV usually stays in the home base, awaiting mission assignment,
and is dispatched as a legitimate monitor once a suspicious link is detected.
As the exact location of the suspicious link is not predictable,
it is not likely that the UAV happens to be within the jamming-free area every time.
Furthermore, by studying the UAV's trajectory with its initial and final locations in or out of the jamming-free area,
we can provide more perspectives and insights for UAV trajectory design when it is assigned a mission of monitoring.
In fact, this is why we consider a more general problem formulation and the proposed solution is applicable to 
different scenarios, wherever the suspicious link is located.
\end{remark}
}

\subsection{Extension to Non-LoS Channels}
If the suspicious transmission and legitimate monitoring links are located in an urban area,
the channel between the suspicious source and destination experiences Rayleigh fading,
which can be modeled as \cite{guangchi}
\begin{equation}\label{rayleigh}
h_0^t = \beta_0 \xi_t d^{-\kappa}, ~\forall t
\end{equation}
where $\xi_t$ is an exponentially distributed random variable with unit mean,
and $\kappa \ge 2$ is the path loss exponent.
The UAV-GN links can be formulated by considering the probabilities of both LoS and non-LoS (NLoS) channels,
where the LoS probability at each time slot for the S-U $(j=1)$ or U-D $(j=2)$  link $p_{LoS,j}^t$ is given by \cite{zyong}
\begin{equation}
p_{LoS,j}^t = \frac{1}{1+C\exp{(-D[\theta_j^t-C])}}, ~\forall t.
\end{equation}
Here the values of $C$ and $D$ is reliant on the propagation environment,
and $\theta_j^t = \frac{180}{\pi} \sin^{-1}(H/d_j^t)$
is the elevation angle in degree, which is closely related to the UAV's distance from the source node $d_1^t$
or the destination node $d_2^t$.
Thereby, the channel power gains of the UAV-GN links are given by \cite{zyong}
\begin{equation}\label{nonlos}
h_j^t = p_{LoS,j}^t \beta_0 {d_j^t}^{-\kappa} + (1-p_{LoS,j}^t) \zeta \beta_0 {d_j^t}^{-\kappa},~\forall t
\end{equation}
where $\zeta<1$ is the extra reduction factor for the NLoS channel.

The Rayleigh fading in \eqref{rayleigh} does not affect the original problem \eqref{p1},
while the NLoS component in \eqref{nonlos} renders problem \eqref{p1} hardly tractable for existing solvers.
To deal with it, we consider the case when $\kappa = 2$; then \eqref{nonlos} can be approximated by \cite{moza17}
\begin{equation}\label{quadratic}
h_j^t \approx \eta_1 {d_j^t}^{-2} +\eta_2, ~\forall t
\end{equation}
where $\eta_1$ and $\eta_2$ are two coefficients relying on the UAV altitude.
Using the expressions in \eqref{quadratic}, the objective function and constraints for the NLoS scenario
are generally in the same form as those in the original problem \eqref{p1}.
Similar to problem \eqref{p2}, the variables in the NLoS scenario can be separated into three blocks, namely,
$\{P_j^t, x_t, y_t\}, \{u_t\}$, and $\{w_t\}$, due to the product of $P_j^t u_t w_t$ invited in constraints \eqref{p24}
by the NLoS component.
The NLoS problem is convex regarding each block of variables when the other two blocks are fixed,
and can thus be solved by the proposed block coordinate descent approach.
Note that due to the approximation in \eqref{quadratic}, only a sub-optimal solution can be obtained.

{\blue
\subsection{Generalization to eavesdropping non-outage events}
In this section, we propose a stochastic model for the eavesdropping system by considering Rayleigh fading for
the suspicious S-D link, i.e., $h_0^t = \beta_0 \xi_t d^{-\kappa}, ~\forall t$ [cf. Eq. (14)].
The UAV channels are all LoS, and successful eavesdropping is not required within each time slot anymore.
Instead, we impose a constraint of non-outage probability to guarantee that the total successful eavesdropping events
satisfy a certain threshold over time.
We introduce the following indicator function $I_t, \forall t$ to denote the successful eavesdropping event of the UAV:
\begin{equation}\label{eq.it}
I_t = \left\{
\begin{aligned}
&1, ~\text{if} ~ \gamma_U^t \geq \gamma_D^t \\
&0, ~\text{otherwise}\\
\end{aligned}\right.
\end{equation}
where $I_t = 1$ and $I_t=0$ indicate eavesdropping non-outage and outage events, respectively.

The original problem is extended to the following form.
\begin{subequations}\label{non-out}
\begin{align}
& \min_{\{P_j^t\}, \{x_t, y_t\}} \sum_{t \in {\cal T}} P_j^t \delta \label{non1}\\
&\text {s.t.} ~\sum_t I_t \ge p_{\text {non}}T_w  \label{non2}\\
&\qquad \eqref{p13}-\eqref{p16} \notag
\end{align}
\end{subequations}
where $p_{\text {non}} \in [0,1]$ is the eavesdropping non-outage probability and constraint \eqref{non2} guarantees that
at least $100p_{\text {non}}\%$ of the total eavesdropping performances are successfully operated.
Constraint \eqref{non2} is actually a relaxed (or generalized) version of constraints \eqref{p12},
which can also take the form of non-outage probability:
\begin{equation}
\mathbb{P}(\gamma_U^1 \ge \gamma_D^1, \ldots, \gamma_U^{T_w} \ge \gamma_D^{T_w}) \ge p_{\text {non}}.
\end{equation}

When $p_{\text {non}} =1$, problem \eqref{non-out} specializes to the original problem \eqref{p1}.
On the other hand, if $p_{\text {non}} =0$, jamming is not needed at all and the optimal value of the objective function
$\sum_{t} P_j^t \delta$ is zero.
In this case, problem \eqref{non-out} reduces to the feasibility problem of finding a trajectory constrained by the UAV's
maximum speed with $P_j^t = 0, \forall t$.
At optimality, jamming signals will be suppressed for at most $100(1-p_{\text {non}})\%$ of the $T_w$ time slots
with worse S-U channels (or higher jamming power consumptions),
and they will be sent, if necessary, in time slots with better S-U channels.
As problem \eqref{non-out} is a relaxed one of the original problem \eqref{p1},
the optimal UAV trajectory for \eqref{p1} is also an optimal one for \eqref{non-out},
and the optimal jamming energy in \eqref{p1} serves as an upper bound for that in \eqref{non-out}.
Problem \eqref{non-out} can be solved by first solving \eqref{p1},
then ranking the values of $\{P_j^{t*}\}_t$ from large to small and setting the top $100(1-p_{\text {non}})\%$ to zero.

\subsection{Extension to two suspicious links}
In this section, we extend the original problem (7) to include two suspicious links for the UAV to monitor simultaneously.
The second pair of suspicious ground source and destination nodes, S$_2$ and D$_2$,
are located at $(0, s_2)$ and $(d, s_2)$, respectively, where $s_2$ is the given y-coordinate of the nodes.
All communication links are assumed to be LoS for simplicity.
We assume that the UAV has three antennas with one of them for jamming and the other two for monitoring each link.
Note that in this case, jamming signal is sent to both links as long as eavesdropping is unsuccessful over one of the links.

The channel power gain for the S$_2$-D$_2$ link is the same as the S-D link, $h_0 = \beta_0 / d^2$.
The channel power gains for the S$_2$-U and U-D$_2$ links are given by
\begin{subequations}
\begin{align}
&h_{21}^t = \frac{\beta_0}{x_t^2 + (s_2-y_t)^2 + H^2}, ~~\forall t \\
&h_{22}^t= \frac{\beta_0}{(d-x_t)^2 + (s_2-y_t)^2 + H^2}, ~~\forall t.
\end{align}
\end{subequations}
The SINR (or SNR) of the S$_2$-D$_2$ and S$_2$-U links are given by
\begin{subequations}
\begin{align}
&\gamma_{D_2}^t = \frac{h_{0}P_x^t}{h_{22}^tP_j^t+\sigma^2}, ~~\forall t \\
&\gamma_{U_2}^t = \frac{h_{21}^tP_x^t}{\sigma^2}, ~~\forall t.
\end{align}
\end{subequations}
Successful eavesdropping requires that $\gamma_{U_2}^t \ge \gamma_{D_2}^t$, and $\gamma_{U}^t \ge \gamma_{D}^t$
for both links.
The new problem of interest can be formulated as
\begin{subequations}\label{ptwo}
\begin{align}
& \min_{\{P_j^t\}, \{x_t, y_t\}} \sum_{t \in {\cal T}} P_j^t \delta \label{ptwo1}\\
&\text {s.t.} ~~\frac{h_{0}P_x^t}{h_{2}^tP_j^t+\sigma^2} \leq \frac{h_{1}^tP_x^t}{\sigma^2}, ~\forall t  \label{ptwo2}\\
&\qquad \frac{h_{0}P_x^t}{h_{22}^tP_j^t+\sigma^2} \leq \frac{h_{21}^tP_x^t}{\sigma^2}, ~\forall t  \label{ptwo3}\\
&\qquad \eqref{p13}-\eqref{p16}.\notag
\end{align}
\end{subequations}

Problem \eqref{ptwo} can be solved by following the same procedure summarized in Algorithm 1.
The jamming-free area of the S$_2$-D$_2$ link is ${\cal A}_2:= \{(x_t, y_t)| \sqrt{x_t^2 + (s_2-y_t)^2 + H^2} \le d, \forall t\}$.
When $|s_2| \le 2d$, the common jamming-free area, i.e., the jamming-free area for problem \eqref{ptwo}
is the intersection of ${\cal A}$ and ${\cal A}_2$,
which is essentially the intersection of two circles centered at $(0,0)$ and $(0,s_2)$, respectively, both with a radius of $d$.
When $|s_2| > 2d$, the common jamming-free area does not exist as ${\cal A} \mathop{\cap}  {\cal A}_2 = \varnothing$.

It is worth noting that the problem of two suspicious links can be further extended to address multiple suspicious links with different separating distances, or aerial (rather than ground) suspicious nodes with 3D optimization of the UAV trajectory. 
}

In a nutshell, we address the problem of jamming energy minimization for a UAV-enabled monitoring system based on
the assumption of sufficient power supply.
We provide useful insights on the UAV trajectory design and reveal its impact on the jamming policy.
However, in practice, such a trajectory design could result in a great cost (and waste) of propulsion power.
Furthermore, it is not possible for untethered UAVs to possess infinite power supply during flight.
Motivated by this, we next investigate the energy optimization based on a more practical setting,
by considering finite power supply and propulsion power consumption at the UAV.

\section{Energy Management for Solar-Powered UAV}\label{sec.energy}

Compared with cables, laser-beams, and on-board batteries, the solar-powered UAVs enjoy a high flexibility and a long flight endurance
in practical deployment.
Apart from communication and jamming power, the UAV consumes additional propulsion power
to maintain airborne and support its movement.
Energy-efficient operation of the UAV needs to be achieved by considering propulsion energy management in system design
\cite{zyong}.

Suppose that the UAV has a solar panel to harvest energy and an on-site battery to save energy.
The UAV's battery is initially charged with $E_0$ amount of energy,
and that it can consume $\vartheta$ portion of $E_0$ during the entire working period,
and save the $(1-\vartheta)$ portion for emergency during landing (to a prescribed platform or home base).
The UAV can fly horizontally and adjust its positions dynamically to enhance the eavesdropping performance.
We pursue the optimal trajectory design and energy management scheme of the UAV by minimizing the total
jamming and propulsion energy consumption for the solar-powered rotary-wing UAV enabled monitoring system.

\begin{figure}[t]
\centering
\includegraphics[width=0.6\textwidth]{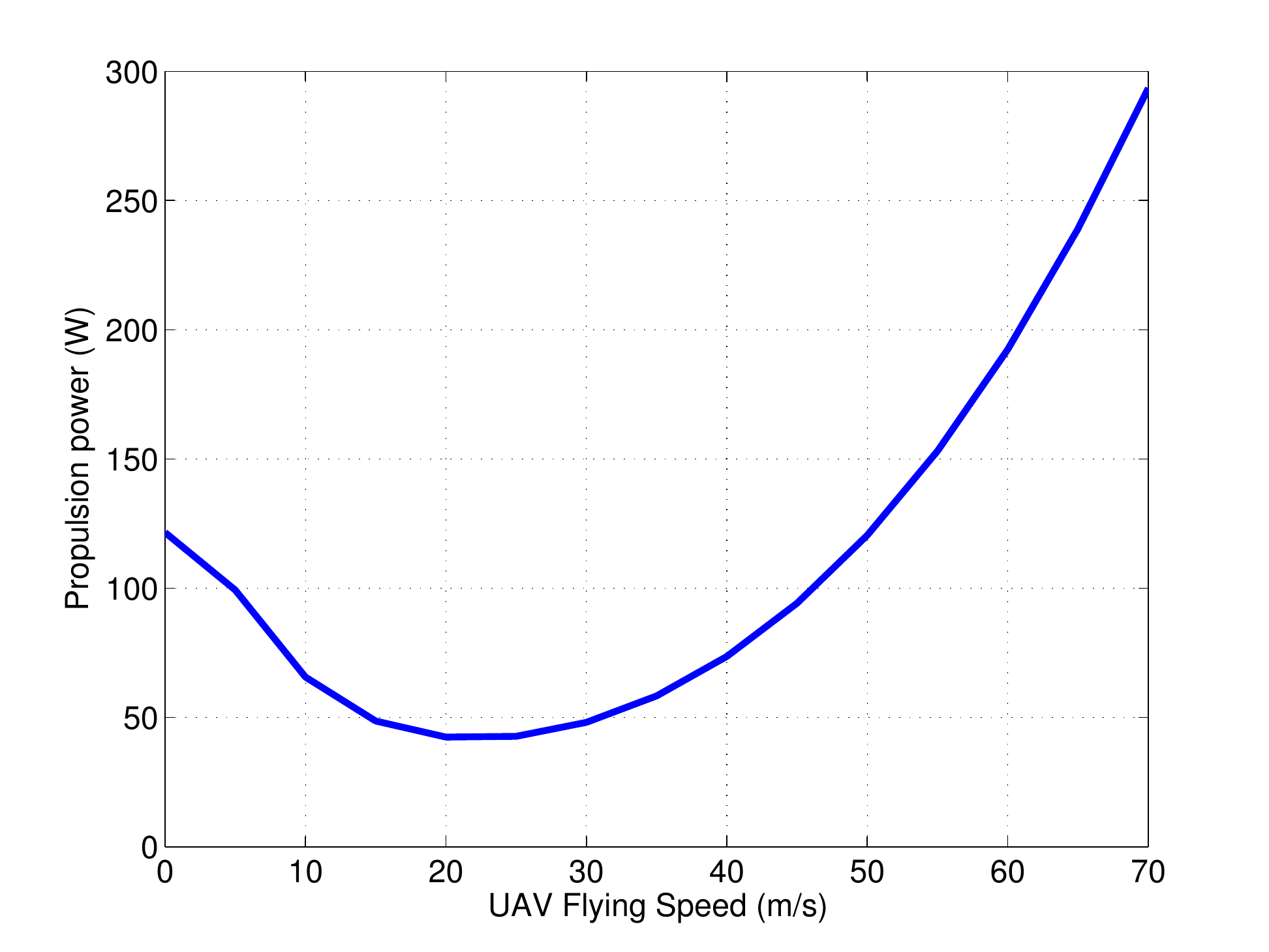}
\caption{Propulsion power versus UAV speed.}
\label{pm}
\end{figure}

\begin{table}[t]
\caption{Parameters for propulsion power \cite{zyong}}
\begin{center}
\begin{tabular}{|c|c|}
\hline
\text{UAV weight}                    &2 kg \\ \hline
\text{Blade profile power and induced power, $P_0$, $P_1$}               &3.4 W, 118 W \\ \hline
\text{Rotor solidity and disc area, $s$, $A$}                       &0.03, 0.28 m$^2$\\ \hline
\text{Tip speed of the rotor blade, $U_{tip}$}     &60 m/s \\ \hline
\text{Mean rotor induce velocity, $v_0$ }     &5.4 m/s\\ \hline
\text{Atmospheric density and fuselage drag ratio, $\rho$, $d_f$}                       &1.225 kg/m$^3$, 0.3 \\ \hline
\end{tabular}
\end{center}
\label{tab.pm}
\end{table}
\subsection{UAV Propulsion Power Model}
Besides transmit (jamming) power, the communication related power includes also that for communication circuitry,
information receiving and decoding, signal processing, etc.
For simplicity, we suppose that such communication connected power is a constant, represented by $P_c$ in watt (W) \cite{szhang, hu20}.
The propulsion power, which typically depends on the UAV speed, is essential to support the UAV's
hovering and moving activities.
For a rotary-wing UAV with speed $V_t$,
the propulsion power at time slot $t$, denoted by $P_m^t$, is given by \cite{zyong}
\begin{equation}\label{speed}
\begin{aligned}
P_m^t = & P_0 \left(1+\frac{3V_t^2}{U_{tip}^2}\right) + P_1\left(\sqrt{1+\frac{V_t^4}{4v_0^4}}-\frac{V_t^2}{2 v_0^2}\right)^{\frac{1}{2}}\\
&+ \frac{1}{2} d_f\rho s AV_t^3
\end{aligned}
\end{equation}
where $P_0$ and $P_1$ have fixed values and stand for the {\it blade profile power} and {\it induced power} under hovering mode,
respectively, $U_{tip}$ is the tip speed of the rotor blade, $v_0$ denotes the average rotor induced velocity in hover,
$d_f$ and $s$ represent the fuselage drag ratio and rotor solidity,
and $\rho$ and $A$ are the atmospheric density and rotor disc area, respectively.
When $V_t=0$, \eqref{speed} corresponds to the power consumption of the hovering state.
Fig. \ref{pm} depicts the typical curve of $P_m^t$ versus $V_t$.
The parameters are set according to Table \ref{tab.pm} \cite{zyong}.
It is revealed by Fig. \ref{pm} that the UAV speed achieving the least power consumption, i.e., about $41.84$ W, 
happens at approximately $V_e = 22.36$~m/s.

\begin{figure}[t] 
\centering
\includegraphics[width=0.6\textwidth]{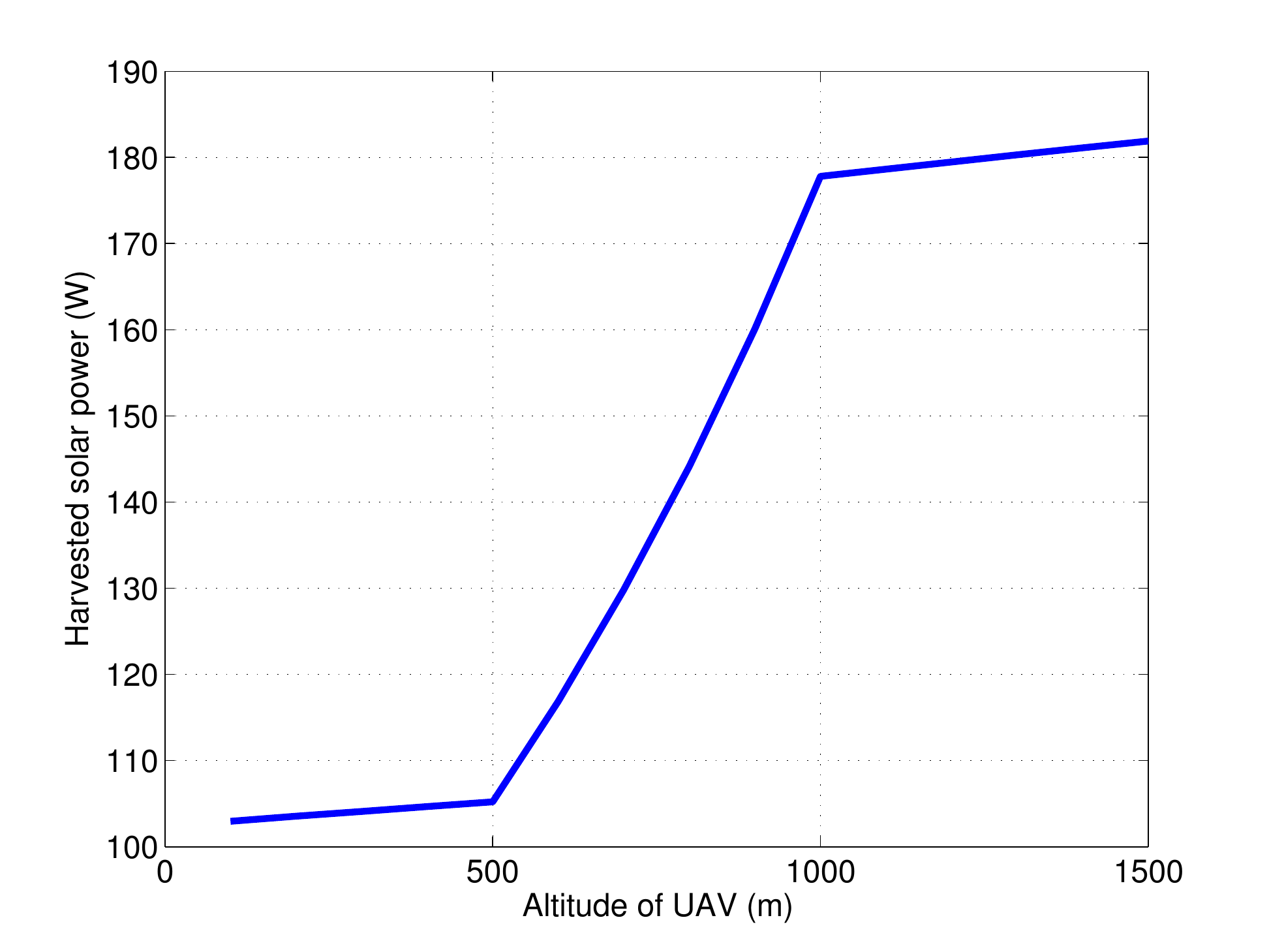}
\caption{Harvested solar power versus UAV altitude.}
\label{solar}
\end{figure}
\begin{table}[t]
\caption{Parameters for solar power \cite{kokh04}}
\begin{center}
\begin{tabular}{|c|c|}
\hline
\text{Atmospheric transmittance, $\alpha$, $\beta_1$}                        &0.8978, 0.2804 \\ \hline
\text{Interception factor of clouds, $\beta_c$}                                 &0.01 \\ \hline
\text{Mean radiant power and scaling altitude, $F$, $\Delta$}               &1367 W/m$^2$, 8000 m\\ \hline
\text{Efficiency and size of solar panel, $\eta$, $S$}                             &0.4, 0.5 m$^2$ \\ \hline
\text{Altitude of cloud, $L$, $U$}                                                           &500 m, 1000 m \\ \hline
\end{tabular}
\end{center}
\label{tab.solar}
\end{table}

We suppose that within each slot $t$, the UAV maintains a constant speed, which is given by
\begin{equation}\label{vt}
V_t = \sqrt{(x_t-x_{t-1})^2 + (y_t - y_{t-1})^2}/ \delta, ~\forall t.
\end{equation}
By substituting \eqref{vt} into \eqref{speed}, we find that the first and the third terms of \eqref{speed}
are jointly convex with respect to $(x_t,x_{t-1}, y_t, y_{t-1})$,
whereas the second term is neither convex nor concave.

\subsection{Solar Power Model}
Generally, the amount of the harvested solar power depends on the atmospheric transmittance and clouds.
As higher altitude results in higher solar intensity, the atmospheric transmittance increases with the altitude,
which can be empirically approximated by the following equation at altitude $H$  \cite{lee17}
\begin{equation}
\phi(H)=\alpha - \beta_1 e^{-H/\Delta}
\end{equation}
where $\alpha$ is the largest possible amount of the atmospheric transmittance,
$\beta_1$ is the extinguishing coefficient of the air,
and $\Delta$ is the scaling altitude.
On the other hand, the solar strength is diminished by cloud.
The reduction of sun light traveling through a cloud can be formulated as \cite{derrick, kokh04}
\begin{equation}
\psi(d_c)=e^{-\beta_c d_c}
\end{equation}
where $\beta_c \geq 0$ is the interception factor of the cloud,
and $d_c$ represents the spacial length that the sunlight travels through the cloud.
Overall, the electric generation power of the solar panels at height $H$ is given by \cite{lee17, kokh04}
\begin{equation}\label{eq.solar}
E(H)=\left\{
\begin{aligned}
&\eta SF\phi(H)\psi(0), H \geq U \\
&\eta SF\phi(H)\psi(U-H), L \leq H < U \\
&\eta SF\phi(H)\psi(U-L), H < L \\
\end{aligned} \right.
\end{equation}
where $\eta \in (0,1)$ and $S$ (in m$^2$) denote the efficiency and size of the solar panel, respectively.
Constant $F$ is the mean radiant power on the ground,
while $U$ and $L$ are the heights of the upper and lower limits of the cloud, respectively.
Fig.~\ref{solar} illustrates the influence of UAV's altitude on the harvested solar power.
The setting of the corresponding parameters are listed in Table \ref{tab.solar} \cite{kokh04}.


\subsection{Problem Formulation}
We aim to design the joint energy management and trajectory planning scheme for the solar-powered UAV aided eavesdropping system by minimizing its total energy consumption, including the jamming energy and the propulsion energy.
Since the UAV flies at a fixed altitude, we can simply use $E$ to denote the amount of solar power instead of $E(H)$.
The problem is formulated as
\begin{subequations}\label{p3}
\begin{align}
&\min_{\{P_j^t, x_t, y_t \}} \sum_{t \in {\cal T}} (P_j^t+P_m^t) \delta \label{p31}\\
&\text {s.t.} ~\sum_{i=1}^t (P_j^t + P_m^t +P_c) \delta \leq \sum_{i=1}^t E + \vartheta E_0, ~\forall t \label{p32}\\
&\eqref{p12} - \eqref{p16}. \notag
\end{align}
\end{subequations}
Constraint \eqref{p32} is the energy-harvesting causality constraint, which is imposed to bound the total consumed energy up to the
current time slot not to exceed the harvested energy plus the battery capacity.

The minimum level of the initially stored energy $\underline E_0$ is chosen such that the UAV can finish the eavesdropping mission without harvested solar energy,
following the shortest trajectory at a constant speed.
In particular, given the line segment connecting its horizontal initial and final locations $(x_0, y_0)$ and $(x_T, y_T)$,
the UAV travels at a fixed speed $\overline V = \sqrt{(x_T-x_0)^2+(y_T-y_0)^2}/T$.
With $\overline V$, we can obtain the total propulsion energy $P_m$ and the UAV's coordinates $(x_t, y_t)$ at each time slot.
Based on the coordinates, we can further calculate its jamming power $P_j^t$ according to \eqref{p12} per time slot.
Then, we can readily obtain the value of $\underline E_0 = P_m + \sum_t (P_j^t+P_c)\delta$.

{\blue
\begin{remark}\textit {(3D UAV trajectory design with altitude optimization):}
3D UAV trajectory design can be pursued by including the UAV altitude as an optimization variable $H_t, \forall t$.
Considering problem (7) and the S-U channel condition
$h_1^t = \frac{\beta_0}{{d_{1}^{t}}^2}=\frac{\beta_0}{x_t^2+y_t^2+H_t^2}, \forall t$, 
the optimal altitude for the UAV is the lowest height within the regulated range that it can stay,
since the UAV enjoys the best channel condition in this way and there is no performance gain by increasing its altitude.

On the other hand, considering the model of harvested solar power in Section IV-B [cf. \eqref{eq.solar}],
a tradeoff can be observed between the UAV channel conditions and the amount of harvested energy [cf. \eqref{p3}].
The UAV has to decide at each time slot whether to fly lower or higher to strike a balance between achieving better eavesdropping performance and harvesting more energy.
With the UAV altitude included as an optimization variable $H_t, \forall t$,
constraints \eqref{p32} become non-convex as the amount of harvested energy $E_t$ is altitude-dependent and time-varying.
It is difficult to convert \eqref{p32} to convex constraints due to the complicated expression of $E_t$,
thus rendering the new problem hardly tractable for existing solvers.
Furthermore, to the best of our knowledge, there is not a general model to capture the power consumptions incurred by both horizontal and vertical movements of the UAV,
which in turn makes it difficult to pursue a joint 3D UAV trajectory design and power allocation.
It will be an interesting direction to pursue in our future works with altitude optimization.
\end{remark}
}

\subsection{SCA-based Convexification and Solution}
The problem \eqref{p3} is not convex since it consists the non-convex term
$\left(\sqrt{1+\frac{V_t^4}{4v_0^4}}-\frac{V_t^2}{2 v_0^2}\right)^{\frac{1}{2}}$ in $P_m^t$,
and the non-convex constraints \eqref{p12}.
The latter can be handled by leveraging the same method as in Section \ref{sec:jam}.
With slack variables $\{u_t, w_t, \forall t\}$, \eqref{p12} can be replaced with the constraints \eqref{p22}-\eqref{p25}.

To tackle the non-convexity with $P_m^t$, we first bring in slack variables $\{q_t \geq 0\}$ such that
\begin{equation}
q_t^2 = \sqrt{1+\frac{V_t^4}{4v_0^4}}-\frac{V_t^2}{2 v_0^2}, ~\forall t
\end{equation}
which is equivalent to
\begin{equation}\label{mu}
\frac{1}{q_t^2} = q_t^2 + \frac{V_t^2}{v_0^2}, ~\forall t.
\end{equation}
The second term of \eqref{speed} can thus be substituted by the linear component $P_1 q_t$,
with the additional constraints \eqref{mu}.
For the purpose of exposition, we now integrate the expression for $V_t$ in \eqref{vt} and let
\begin{equation}\label{newpm}
\begin{aligned}
{\tilde P}_m^t := & P_0 + \frac{3P_0}{U_{tip}^2 \delta^2}\left[(x_t - x_{t-1})^2 + (y_t - y_{t-1})^2\right] + P_1 q_t \\
& + \frac{d_f}{2\delta^3} \rho s A\left[(x_t - x_{t-1})^2 + (y_t - y_{t-1})^2\right]^{3/2}, ~\forall t.
\end{aligned}
\end{equation}
We can see that ${\tilde P}_m^t$ is now jointly convex with respect to $(x_t,x_{t-1}, y_t, y_{t-1}, q_t)$.
With such a manipulation, problem \eqref{p3} can be written as
\begin{subequations}\label{p3.1}
\begin{align}
& \min_{\substack{\{P_j^t, q_t,u_t \} \\ \{x_t, y_t,w_t\}}} \sum_{t \in {\cal T}} ( P_j^t+{\tilde P}_m^t ) \delta \label{p3.11}\\
&\text {s.t.}~ \sum_{i=1}^t ( P_j^t + \tilde P_m^t + P_c )\delta \leq \sum_{i=1}^t E + \vartheta E_0, ~\forall t \label{p3.12} \\
&\frac{1}{q_t^2} \leq q_t^2 + \frac{(x_t - x_{t-1})^2 + (y_t - y_{t-1})^2}{\hat v_0^2 }, ~\forall t \label{p3.14}\\
&\eqref{p13} - \eqref{p16}, \eqref{p22} - \eqref{p25} \notag
\end{align}
\end{subequations}
where $\hat v_0^2=v_0^2 \delta^2$.

Note that constraints \eqref{p3.14} are obtained from \eqref{mu} by replacing the equations with inequalities.
Yet, equivalence still holds between problems \eqref{p3} and \eqref{p3.1}.
To examine this, we assume that if any of the constraints in \eqref{p3.14} is met with strict inequality when
achieving optimality for problem \eqref{p3.1},
we can decrease the related value of variable $q_t$ to enable constraint \eqref{p3.14} met with equality,
while reducing the total energy consumption (objective value) at the same time.
Therefore, all constraints in \eqref{p3.14} are met with equality at optimality.
The same equivalence also holds for constraints \eqref{p22} and \eqref{p23} as explained in Section \ref{sec:jam}.
Hence, problems \eqref{p3} and \eqref{p3.1} are equivalent.

Problem \eqref{p3.1} is still non-convex since it consists the non-convex constraints in \eqref{p3.14}.
However, it can be tackled with the successive convex approximation (SCA) method
by calculating the global lower bounds at a given local point.
In particular, for \eqref{p3.14}, the left-hand-side (LHS) is a convex function in $q_t$, and the right-hand-side (RHS) is a jointly convex
function regarding $q_t$ and $(x_t,x_{t-1}, y_t, y_{t-1})$.
Since the first-order Taylor expansion serves as the global lower bound of a convex function,
we can obtain the following inequality for the RHS of \eqref{p3.14}
\begin{equation}\label{eq.q}
\begin{aligned}
&q_t^2 + \frac{(x_t - x_{t-1})^2 + (y_t - y_{t-1})^2}{\hat v_0^2} \geq q_t^{(l)2} + 2q_t^{(l)}(q_t - q_t^{(l)})  \\
&+\frac{2}{\hat v_0^2}[(x_t^{(l)} - x_{t-1}^{(l)})(x_t - x_{t-1})+(y_t^{(l)} - y_{t-1}^{(l)})(y_t - y_{t-1})]\\
&- \frac{1}{\hat v_0^2}[(x_t^{(l)} - x_{t-1}^{(l)})^2 + (y_t^{(l)} - y_{t-1}^{(l)})^2]
\end{aligned}
\end{equation}
where $q_t^{(l)}$, $x_t^{(l)}$, and $y_t^{(l)}$ are the present values of the corresponding variables at the $l$-th iteration, respectively.
\begin{algorithm}[t]
\caption{SCA-based Method for Problem \eqref{p3.2}}
\label{algo:sco}
\begin{algorithmic}[1]
\State {\bf Initialization:} Find an initially feasible solution $\{P_j^t(0), x_t(0), y_t(0), q_t(0),u_t(0), w_t(0)\}$ for Problem \eqref{p3.2}.
\For {$l$ = 0, 1, 2, ...}
\State Obtain the optimal solution of $\{P_j^t(l+1), q_t(l+1), u_t(l+1)\}$ with $\{q_t(l), x_t(l), y_t(l), w_t(l) \}$ fixed.
\State Compute the optimal solution of $\{x_t(l+1), y_t(l+1), w_t(l+1) \}$ with $\{P_j^t(l+1), q_t(l+1), u_t(l+1)\}$ fixed.
\State Update $l=l+1$.
\EndFor
\end{algorithmic}
\end{algorithm}
By substituting the non-convex constraints \eqref{p3.14} with its lower bound at the $l$-th iteration acquired by \eqref{eq.q},
we can establish the following optimization problem
\begin{subequations}\label{p3.2}
\begin{align}
& \min_{\substack{\{P_j^t, q_t,u_t \} \\ \{x_t, y_t,w_t\}}} \sum_{t \in {\cal T}} ( P_j^t+{\tilde P}_m^t ) \delta \label{p3.21}\\
&\text {s.t.}~ \frac{1}{q_t^2} \leq q_t^{(l)2} + 2q_t^{(l)}(q_t - q_t^{(l)}) 
+\frac{2}{\hat v_0^2}[(x_t^{(l)} - x_{t-1}^{(l)})(x_t - x_{t-1})+(y_t^{(l)} - y_{t-1}^{(l)})(y_t - y_{t-1})] \notag \\
& \qquad \qquad -\frac{1}{\hat v_0^2}[(x_t^{(l)} - x_{t-1}^{(l)})^2 + (y_t^{(l)} - y_{t-1}^{(l)})^2], ~\forall t \label{p3.22}\\
&q_t \geq 0, ~\forall t \label{p3.23} \\
&\eqref{p13} - \eqref{p16}, \eqref{p22} - \eqref{p25}, \eqref{p3.12}. \notag
\end{align}
\end{subequations}
It can be justified that problem \eqref{p3.2} is convex in $\{P_j^t, q_t,u_t \}$ for fixed $\{x_t, y_t,w_t\}$,
and it is convex in $\{x_t, y_t,w_t\}$ for fixed $\{P_j^t, q_t,u_t \}$.
Similarly, we can leverage the alternating optimization method to acquire the optimal values of one block of variables with the other fixed
iteratively.
The proposed algorithm is summarized in Algorithm \ref{algo:sco}.

{\blue
In the proposed algorithm, each subproblem is a convex program,
which can be efficiently tackled via classic convex optimization methodologies in polynomial time.
It is worth noting that because of the global lower bounds in \eqref{eq.q}, when the constraints of problem \eqref{p3.2} are fulfilled,
those for the original problem \eqref{p3.1} are also fulfilled; yet the reverse does not necessarily hold.
Thereby, the feasible region of \eqref{p3.2} is a subset of that for \eqref{p3.1},
and the optimal value of \eqref{p3.2} draws an upper limitation to that of \eqref{p3.1}.
By sequentially renewing the local point at each iteration through solving \eqref{p3.2}, our proposed approach is established
for the non-convex optimization problem \eqref{p3.1} or its original problem \eqref{p3}.
Through the similar statements in \cite{zyong} and \cite{zappone},
it is demonstrated that the proposed approach is ensured to converge to at least a solution
that fulfills the KKT conditions of problem \eqref{p3.1}.
A high-quality sub-optimal solution can therefore be obtained by our proposed algorithm
with a computational complexity of $\mathcal{O}(T_w^{3.5})$ at a fast convergence speed,
as will be corroborated by simulation results provided in Section V.
}

\begin{figure}[t]
\centering
\includegraphics[width=0.65\textwidth]{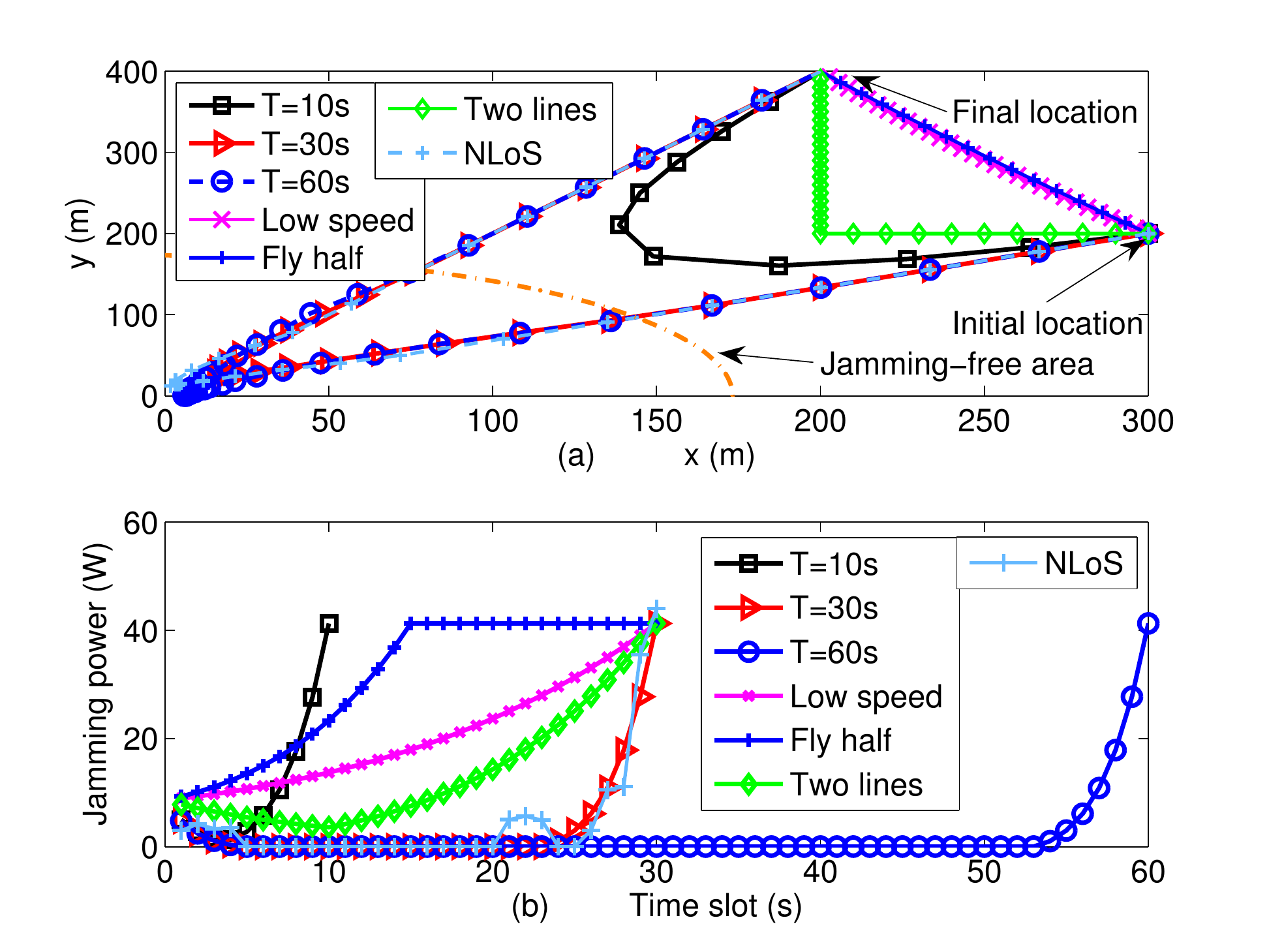}
\caption{UAV trajectory designs and jamming power allocations under NF scenario.}
\label{bothben}
\end{figure}




\section{Numerical Results}\label{sec.sim}
In this section, we provide numerical results for the proposed approaches.
The reference channel power $\beta_0$ is set as $10^{-12}$, the noise $\sigma$ is set as $-169$ dBm/Hz,
and the communication bandwidth is $10$ MHz.
The distance between S and D is $d = 200$ m, and the UAV flies at an altitude of $H = 100$ m.
The maximum horizontal speed of the UAV is set as $\tilde V_m = 40$ m/s.
The slot length is $\delta = 0.1$ s.
The original capacity of the battery $E_0$ is $7 \times 10^3$ J.
Parameters concerning the propulsion power and the harvested solar power are the same as in Tables \ref{tab.pm} and \ref{tab.solar}.
To evaluate the proposed optimal trajectory design and power allocation schemes,
we test three pairs of coordinates for the initial and final locations of the UAV.
The three test cases are:
1) JF (Jamming Free) scenario: both the initial and final locations are inside the jamming-free area of $\cal A$, namely,
$(x_0, y_0) = (-50~{\text m},-100~{\text m})$, and $(x_T, y_T) = (100~{\text m},140~{\text m})$;
2) IF (Initial jamming Free) scenario: the initial location is inside $\cal A$ and the final location is outside $\cal A$, namely,
$(x_0, y_0) = (-50~{\text m},0)$, and $(x_T, y_T) = (100~{\text m},350~{\text m})$;
and 3) NF (No jamming Free) scenario: both locations are outside $\cal A$, namely,
$(x_0, y_0) = (300~{\text m},200~{\text m})$, and $(x_T, y_T) = (200~{\text m},400~{\text m})$.
\begin{figure}[t]
\centering
\includegraphics[width=0.65\textwidth]{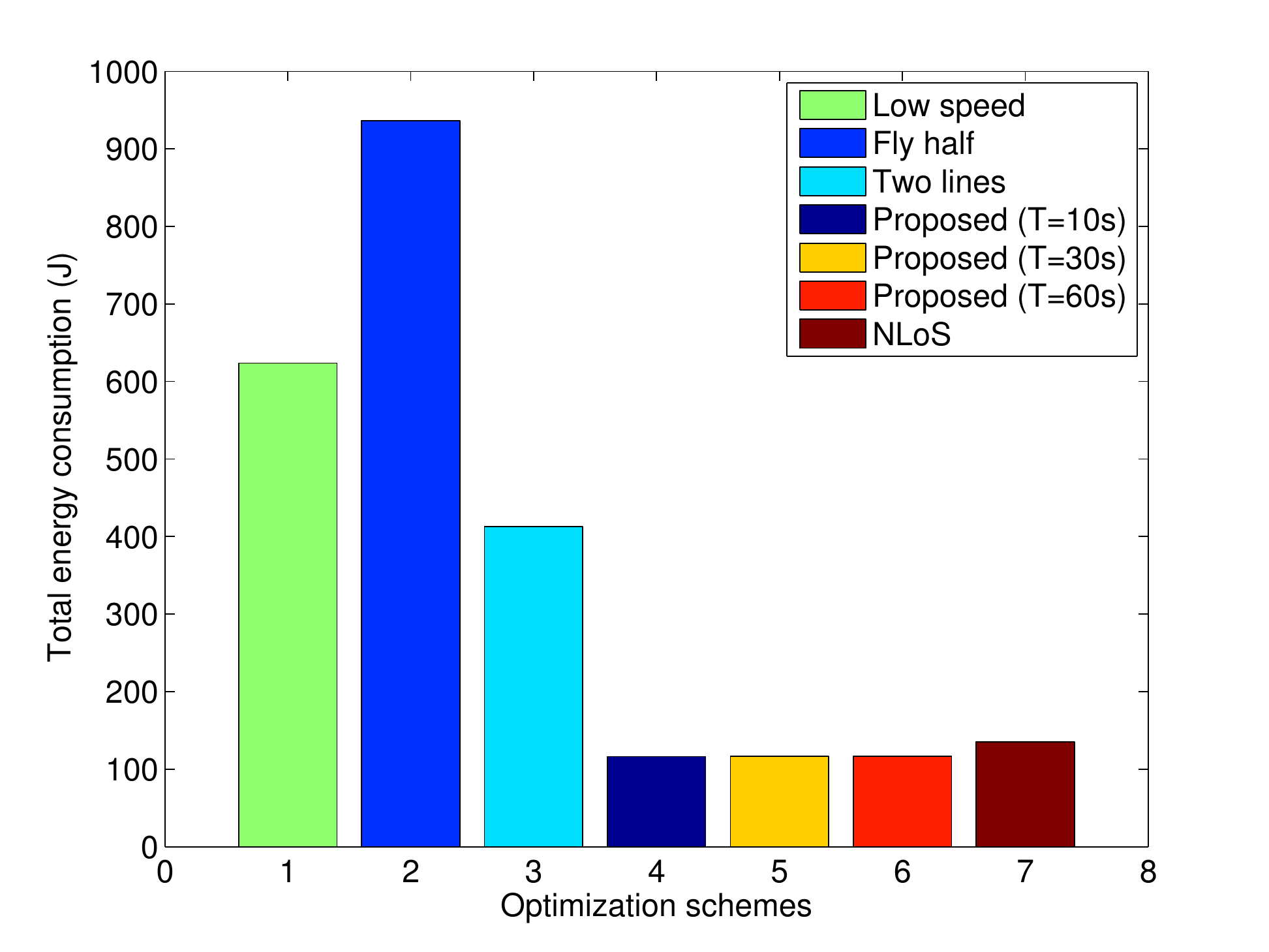}
\caption{Total jamming energy consumptions of the UAV under NF scenario.}
\label{totaljam}
\end{figure}
To further observe the UAV's behavior, we adopt three time horizons for each scenario,
namely, $T=10$ s, $30$ s and $60$ s.
Trajectory and power consumptions of the UAV are depicted every second.
Note that all pairs of coordinates are carefully selected such that at least one feasible trajectory can be found for the UAV
in the shortest time horizon.

Fig. \ref{bothben} depicts the UAV's trajectory designs (Fig. \ref{bothben}(a)) and jamming power allocations (Fig. \ref{bothben}(b))
for the simple system model in problem \eqref{p1} under the NF scenario.
The jamming-free area of $\cal A$ for the LoS links is illustrated by an orange dash-dot line.
It can be observed from Fig. \ref{bothben} that when both the initial and final locations are outside $\cal A$,
the UAV intends to fly towards $\cal A$ first, then travel to the final location.
With sufficient traveling time ($T=30$ s and $60$ s), the UAV first flies fast to $\cal A$,
then takes a detour at a very low speed inside $\cal A$,
and finally travels quickly to its final location.
During this process, the jamming power first decreases, then stays at zero, and finally increases quickly in the last few time slots.
This is consistent with the results in Lemma~\ref{lemma.time} and Proposition~\ref{prop.jam}.
We also include the performance of the UAV with NLoS links when $T=30$ s in Fig. \ref{bothben} (labeled as ``NLoS'').
It can be seen that the UAV's trajectory does not vary much under this scenario,
and that its jamming power does not change smoothly with its distance from the source due to the randomness invited by
the S-D link.
To further validate the advantage of trajectory design on energy reduction, we examine three baseline schemes of the UAV
under the NF scenario when $T=30$ s.
The first scheme is labeled as ``Low speed'', where the UAV travels straightly from the initial location to the final location
at a fixed speed ($7.46$ m/s).
The second scheme is labeled as ``Fly half'', where the UAV flies straightly to the final location at a constant speed ($14.91$ m/s)
during the first half of the period ($15$ s), and hovers at the destination for the rest of the period.
The third scheme is labeled as ``Two lines'', where the UAV first flies directly towards the point $(200~{\text m},200~{\text m})$,
then flies to the final location, following the trajectory of two line segments at a constant speed of $10$ m/s.

\begin{figure}[t]
\centering
\includegraphics[width=0.65\textwidth]{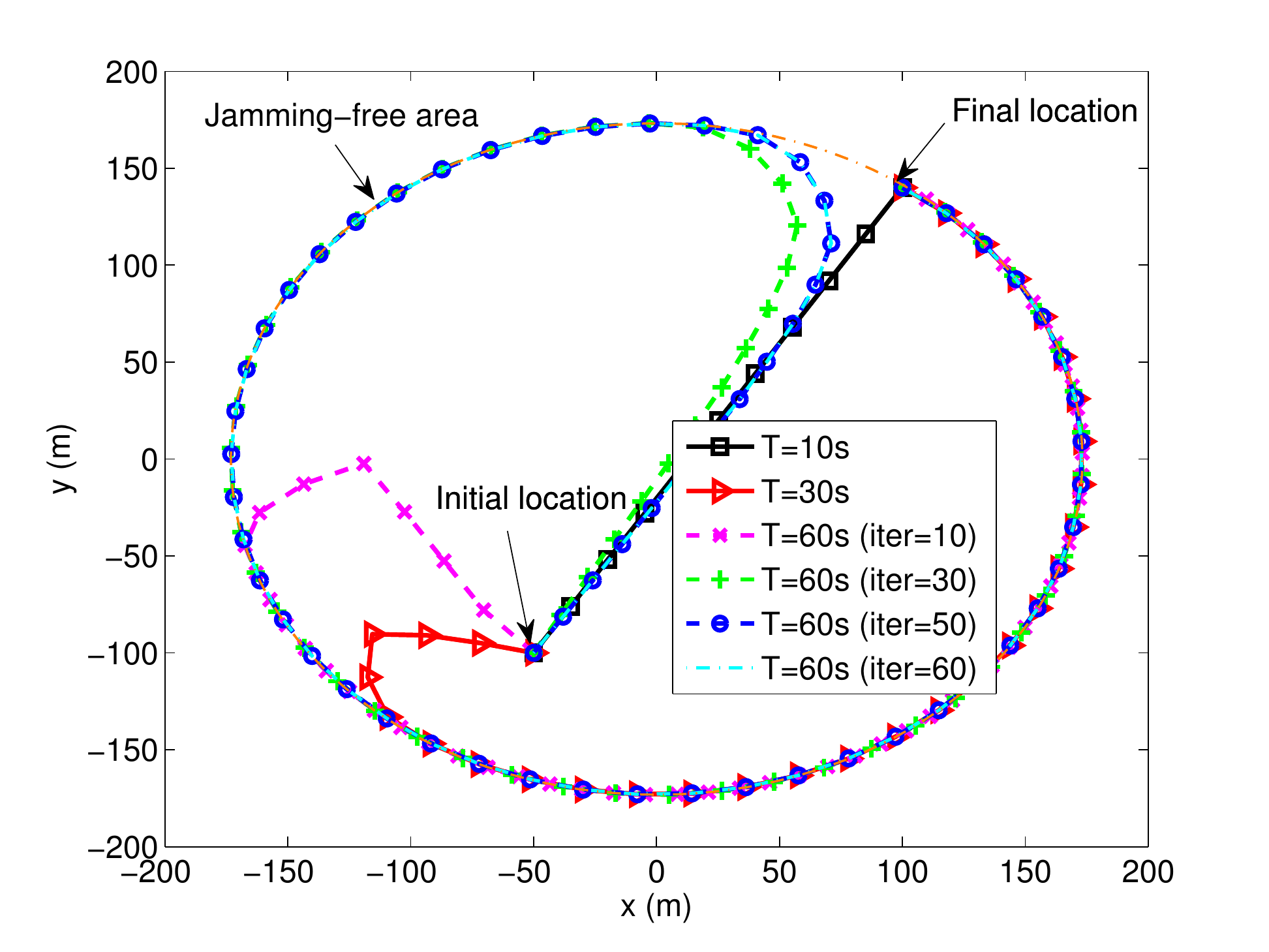}
\caption{UAV trajectory designs under JF scenario.}
\label{scojftj}
\end{figure}
\begin{figure}[th]
\centering
\includegraphics[width=0.65\textwidth]{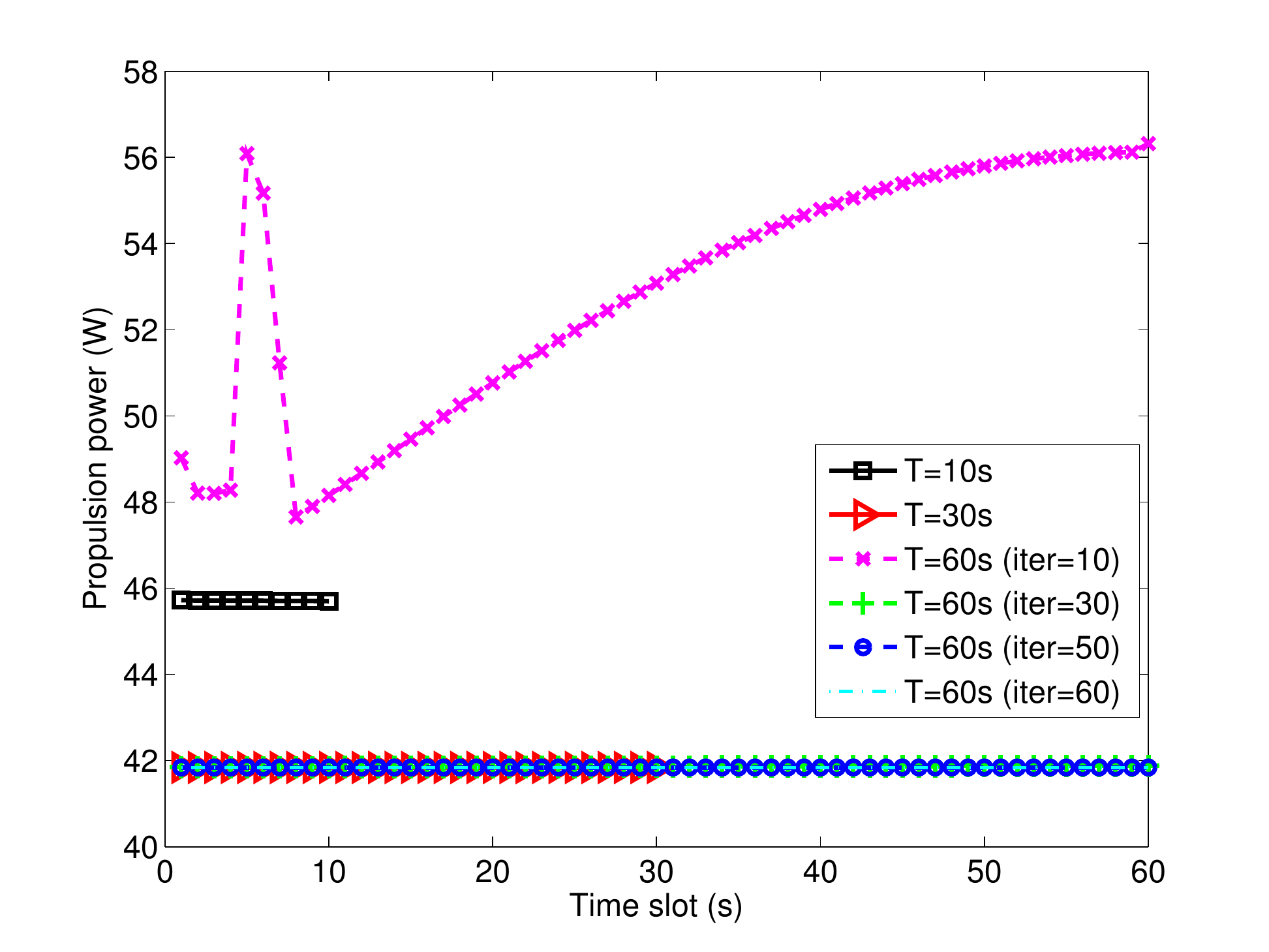}
\caption{UAV power allocations under JF scenario.}
\label{scojfp}
\end{figure}
\begin{figure}[th]
\centering
\includegraphics[width=0.65\textwidth]{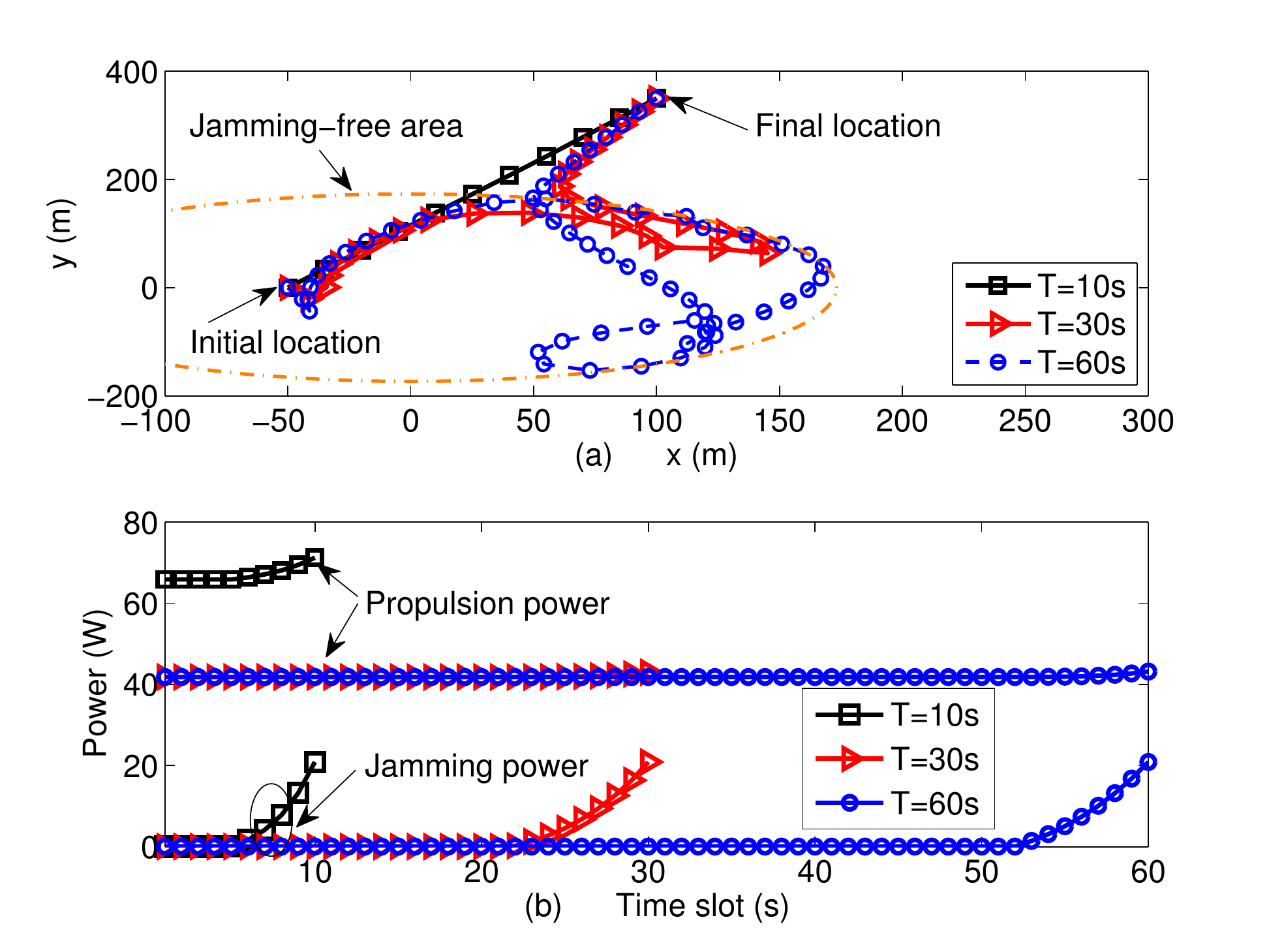}
\caption{UAV trajectory designs and power allocations under IF scenario.}
\label{scone}
\end{figure}

\begin{figure}[th]
\centering
\includegraphics[width=0.65\textwidth]{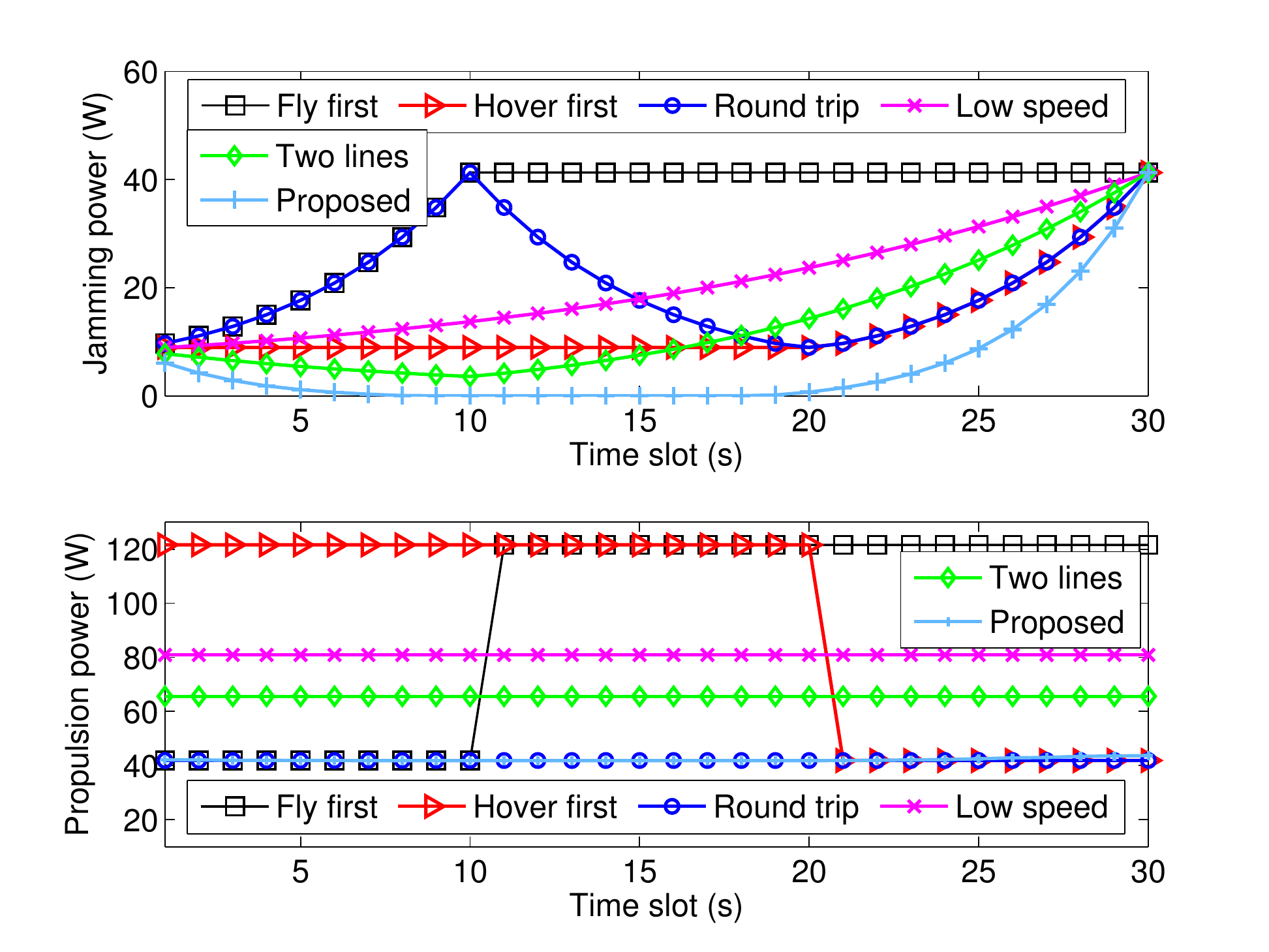}
\caption{UAV power allocations for baseline schemes under NF scenario.}
\label{ben4}
\end{figure}

Fig. \ref{totaljam} shows the total energy consumptions of the UAV under the NF scenario.
It is unveiled by Figs. \ref{bothben} and \ref{totaljam} that the UAV consumes significantly more jamming energy
without careful trajectory design.
{\blue The overall energy consumption of the ``Fly half'' scheme is almost ninefold of that of our proposed scheme,
since the UAV flies quickly to the destination and hovers there for a relatively long period.
As the destination is far from the source node, the longer the UAV stays there, the more jamming energy it consumes.}
The ``Two lines'' scheme is the most energy-efficient among the baseline schemes as it amounts to a simple optimization
of the trajectory.
Under the same parameter setting, the UAV consumes more energy with NLoS links than with LoS links,
since it experiences greater path loss with the former.
Note that for the proposed scheme, the total jamming energy does not increase with the length of the scheduling period,
since the UAV always spends the same duration outside $\cal A$.

\begin{figure}[t]
\centering
\includegraphics[width=0.65\textwidth]{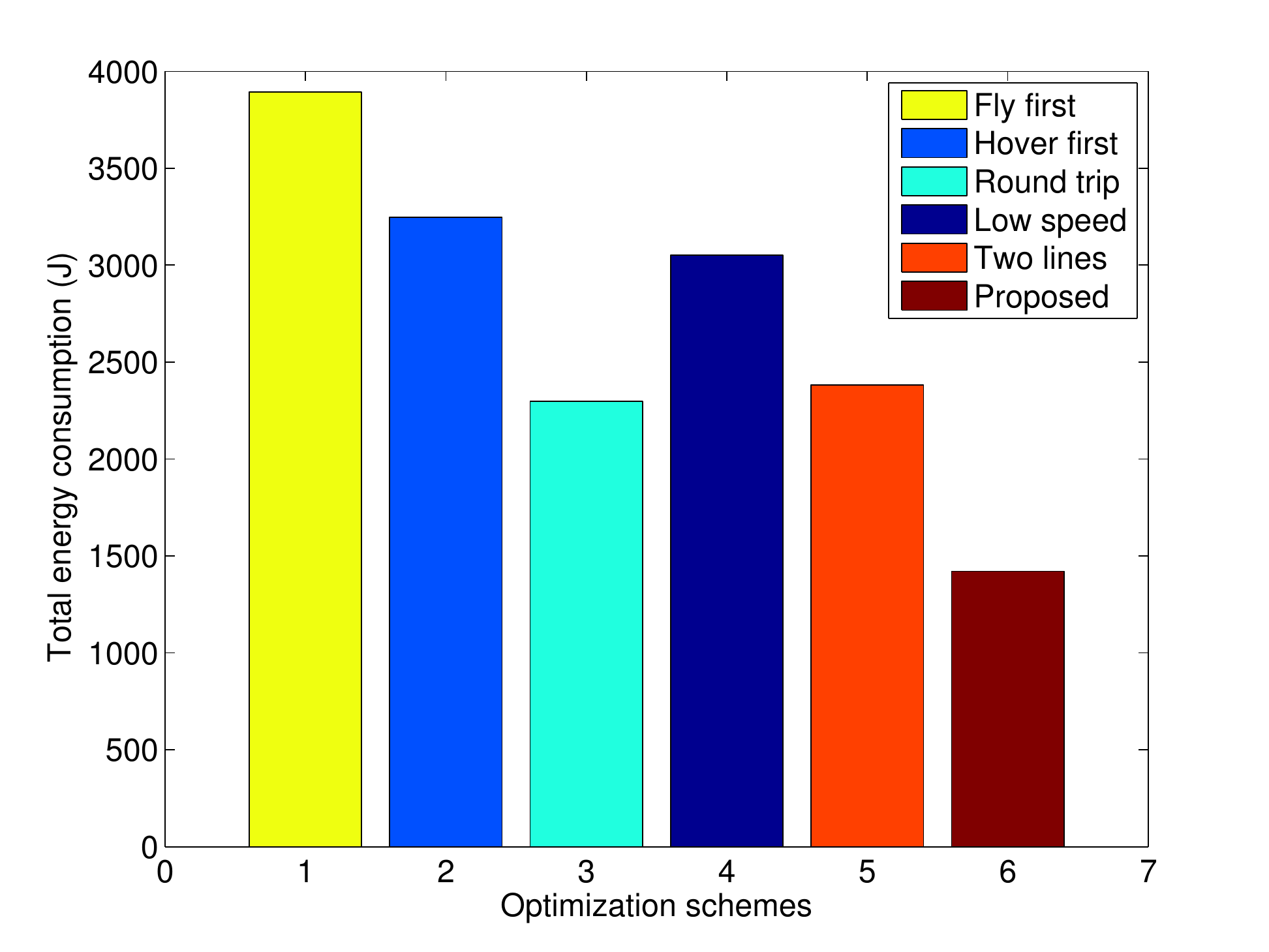}
\caption{Total energy consumptions of the UAV under NF scenario.}
\label{total}
\end{figure}




Figs. \ref{scojftj}--\ref{total} depict the trajectory designs and energy management schemes of the UAV
based on the system model proposed for problem \eqref{p3} in Section \ref{sec.energy},
where harvested solar energy, propulsion power and other circuit consumptions are considered.
Specifically, Figs. \ref{scojftj} and \ref{scojfp} demonstrate the convergence of the proposed approach for trajectory design
and propulsion power assignment of the UAV under the JF scenario.
Since the lines for the $50$-th and $60$-th iterations completely overlap,
a fast convergence within $50$ iterations can be readily observed in both figures.
The UAV travels either inside the jamming-free area of ${\cal A}$ or at the edge of ${\cal A}$ and incurs no jamming power at any time,
which corroborates Lemma \ref{lemma.free}.
When $T=10$ s, the UAV can only choose to finish the journey at the speed incurring as little power consumption as possible.
When the time horizon is sufficiently long ($T=30$ s and $60$ s), the UAV carefully designs its trajectory
and takes a detour inside ${\cal A}$ so that it can travel at the best energy-efficient speed $V_e$ all the way.


Furthermore, it is shown in Fig. \ref{scone} that for the IF scenario
when $T=10$ s, the UAV has to travel straightly from the initial location to the final location at a speed much faster than $V_e$,
thus leading to a significant amount of propulsion power consumption at each slot.
If there is surplus time,
the UAV first travels inside $\cal A$ at a constant speed of $V_e$, which minimizes the propulsion power consumption.
Then it flies to the final location, which is outside $\cal A$, in the last few time slots.
This trajectory design enables the UAV to stay inside $\cal A$ for as long as possible,
since the shorter time it stays outside $\cal A$, the less jamming energy it consumes.


To fully demonstrate the influence and merits of delicate trajectory design for the UAV,
we again compare with three baseline schemes where the UAV adopts different flying protocols for the same trajectory
as the ``Low speed'' scheme when $T=30$ s.
The first protocol is labelled as ``Fly first'', where the UAV flies to the final location at approximately $V_e = 22.36$ m/s
in the first $10$ s, then hovers at the destination for the rest $20$ s.
The second protocol is labelled as ``Hover first'', where the UAV hovers above the initial location in the first $20$ s,
then flies to the final location for the rest $10$ s at speed $V_e$.
The third protocol is labelled as ``Round trip'', where the UAV first takes a round trip between the initial and final locations,
then flies again to the destination, at speed $V_e$ during the flight period.
To facilitate comparison, we also include the ``Low speed'', ``Two lines'', and our proposed schemes under the NF scenario.
Figs. \ref{ben4} and \ref{total} depict the jamming and propulsion power allocations at each time slot,
and the total energy consumptions of the UAV, respectively.
{\blue Table \ref{tab.respective} lists the respective jamming and propulsion energy consumptions for different schemes.}
It can be readily seen from Fig. \ref{ben4} that the UAV needs to send jamming signals at every time slot
under the five baseline schemes,
as it is always traveling outside $\cal A$.
The propulsion power consumption for hovering triples that for traveling at speed $V_e$.
The ``Fly first'' scheme incurs the largest energy consumption for both jamming and propulsion,
due to its $20$ s hovering at the farthest point from the suspicious source.
It is further revealed in Fig. \ref{total} that the total energy consumption of the ``Round trip'' scheme
is the lowest among the baseline schemes,
since it adopts a simple trajectory design with the energy-efficient speed.
{\blue It is observed from Table \ref{tab.respective} that the jamming energy consumption is the highest
for the ``Fly first'' and ``Round trip'' schemes, while the propulsion energy consumption is the highest
for the ``Fly first'' and ``Hover first'' schemes.
Our proposed scheme consumes the least jamming energy and propulsion energy.}
Clearly, all of the baseline schemes consume more energy than our proposed scheme.
In a nutshell, the UAV suffers significant waste of energy without careful trajectory optimization.

\begin{table}[t]
\caption{Respective jamming and propulsion energy consumptions for different schemes under NF scenario.}
\begin{center}
\begin{tabular}{|c|c|c|}
\hline
\text{Optimization schemes}    &Jamming energy (J)  & Propulsion energy (J)   \\ \hline
\text{Fly first}               & 1042.9         & 2850.0     \\ \hline
\text{Hover first}          & 396.3          & 2850.0      \\ \hline
\text{Round trip}         & 1042.9         &1255.2       \\ \hline
\text{Low speed}        & 623.7        & 2427.5   \\ \hline
\text{Two lines}          &  412.8      & 1968.6      \\ \hline
\text{Proposed}          & 165.2       & 1255.2     \\ \hline

\end{tabular}
\end{center}
\label{tab.respective}
\end{table}

{\blue
\begin{remark}\textit {(Mitigating the interference on other links):}
When the suspicious link intentionally chooses to be located in a wild rural area to avoid surveillance by existing monitoring infrastructures,
there would be few communication links in the vicinity, and the interference caused by jamming could thereby be reduced to the minimum level, which is negligible.
In fact, it typically depends on the access scheme whether jamming suppresses communications of other links.
For instance, if the suspicious link occupies a certain frequency band all to itself,
the UAV is able to send exclusive jamming signals to it, which will not affect other links. 
On the other hand, if serious communication degradation is reported by legitimate users within the neighborhood,
the UAV can release the specific transmitted (and encrypted) information to these users
so that they can decode the jamming signals and will not be interfered.
Note that the maximum jamming power of $40$ W in Figs. \ref{bothben} and \ref{ben4} is the worst-case value tested in the simulation.
Yet in practice, the UAV is not usually that far away from the suspicious source
and does not incur such a high jamming power consumption.
\end{remark}
}

\section{Conclusion}\label{sec.con}
We addressed joint energy management and trajectory optimization for a rotary-wing UAV enabled legitimate
monitoring system.
Building on a judicious (re-)formulation, 
we leveraged the alternating optimization and successive convex approximation methodologies
to minimize the overall energy consumption of the UAV.
Efficient algorithms were developed to compute the locally optimal solution or 
at least a feasible solution fulfilling the KKT conditions.
We provided extensive numerical test results to justify the effectiveness of the proposed schemes.
The proposed framework also inspires new directions for future researches on security issues
in UAV-aided wireless networks such as wireless power transfer and/or mobile edge computing based ones,
especially with non-LoS channels and 3D trajectory planning.


\balance

\bibliographystyle{IEEEtran}
\bibliography{eaves_ref}

\begin{thebibliography}{10}
\providecommand{\url}[1]{#1}
\csname url@samestyle\endcsname
\providecommand{\newblock}{\relax}
\providecommand{\bibinfo}[2]{#2}
\providecommand{\BIBentrySTDinterwordspacing}{\spaceskip=0pt\relax}
\providecommand{\BIBentryALTinterwordstretchfactor}{4}
\providecommand{\BIBentryALTinterwordspacing}{\spaceskip=\fontdimen2\font plus
\BIBentryALTinterwordstretchfactor\fontdimen3\font minus
  \fontdimen4\font\relax}
\providecommand{\BIBforeignlanguage}[2]{{%
\expandafter\ifx\csname l@#1\endcsname\relax
\typeout{** WARNING: IEEEtran.bst: No hyphenation pattern has been}%
\typeout{** loaded for the language `#1'. Using the pattern for}%
\typeout{** the default language instead.}%
\else
\language=\csname l@#1\endcsname
\fi
#2}}
\providecommand{\BIBdecl}{\relax}
\BIBdecl

\bibitem{Zeng}
Y.~Zeng, R.~Zhang, and T.~J. Lim, ``Wireless communications with unmanned
  aerial vehicles: {O}pportunities and challenges,'' \emph{IEEE Commun. Mag.},
  vol.~54, no.~5, pp. 36--42, May 2016.

\bibitem{wu19}
Q.~Wu, L.~Liu, and R.~Zhang, ``Fundamental tradeoffs in communication and
  trajectory design for {UAV}-enabled wireless network,'' \emph{IEEE Wireless
  Commun.}, vol.~26, no.~1, pp. 36--44, Feb. 2019.

\bibitem{zeng19}
Y.~Zeng, Q.~Wu, and R.~Zhang, ``Accessing from the sky: {A} tutorial on {UAV}
  communications for {5G} and beyond,'' \emph{Proc. IEEE}, vol. 107, no.~12,
  pp. 2327--2375, Dec. 2019.

\bibitem{survey1}
A.~Fotouhi, H.~Qiang, M.~Ding, M.~Hassan, L.~Garcia-Giordano, A.~G. Rodriguez,
  and J.~Yuan, ``Survey on {UAV} cellular communications: {P}ractical aspects,
  standardization advancements, regulation, and security challenges,''
  \emph{IEEE Commun. Surv. Tut.}, vol.~21, no.~4, pp. 3417--3442, 4th Quart.,
  2019.

\bibitem{survey2}
M.~Mozaffari, W.~Saad, M.~Bennis, Y.-H. Nam, and M.~Debbah, ``A tutorial on
  {UAV}s for wireless networks: {A}pplications, challenges, and open
  problems,'' \emph{IEEE Commun. Surv. Tut.}, vol.~21, no.~3, pp. 2334--2360,
  3rd Quart., 2019.

\bibitem{mot17}
N.~H. Motlagh, M.~Bagaa, and T.~Taleb, ``U{AV}-based {IoT} platform: {A} crowd
  surveillance use case,'' \emph{IEEE Commun. Mag.}, vol.~55, no.~2, pp.
  128--134, Feb. 2017.

\bibitem{xiaoli}
X.~Xu, Y.~Zeng, Y.~Guan, and R.~Zhang, ``Overcoming endurance issue:
  {UAV}-enabled communications with proactive caching,'' \emph{IEEE J. Sel.
  Areas Commun.}, vol.~36, no.~6, pp. 1231--1244, Jun. 2018.

\bibitem{zeng18}
Y.~Zeng, X.~Xu, and R.~Zhang, ``Trajectory optimization for completion time
  minimization in {UAV}-enabled multicasting,'' \emph{IEEE Trans. Wireless
  Commun.}, vol.~17, no.~4, pp. 2233--2246, Apr. 2018.

\bibitem{wu18mar}
Q.~Wu, Y.~Zeng, and R.~Zhang, ``Joint trajectory and communication design for
  multi-{UAV} enabled wireless networks,'' \emph{IEEE Trans. Wireless Commun.},
  vol.~17, no.~3, pp. 2109--2121, Mar. 2018.

\bibitem{wu18dec}
Q.~Wu and R.~Zhang, ``Common throughput maximization in {UAV-enabled OFDMA}
  systems with delay consideration,'' \emph{IEEE Trans. Commun.}, vol.~66,
  no.~12, pp. 6614--6627, Dec. 2018.

\bibitem{kli16}
K.~Li, W.~Ni, X.~Wang, R.~Liu, S.~Kanhere, and S.~Jha, ``Energy-efficient
  cooperative relaying for unmanned aerial vehicles,'' \emph{IEEE Trans. Mobile
  Comput.}, vol.~15, no.~6, pp. 1377--1386, Jun. 2016.

\bibitem{zeng16}
Y.~Zeng, R.~Zhang, and T.~J. Lim, ``Throughput maximization for {UAV}-enabled
  mobile relaying systems,'' \emph{IEEE Trans. Commun.}, vol.~64, no.~12, pp.
  4983--4996, Dec. 2016.

\bibitem{yang19}
Z.~Yang, C.~Pan, K.~Wang, and M.~Shikh-Bahaei, ``Energy efficient resource
  allocation in {UAV}-enabled mobile edge computing networks,'' \emph{IEEE
  Trans. Wireless Commun.}, vol.~18, no.~9, pp. 4576--4589, Sep. 2019.

\bibitem{moza17}
M.~Mozaffari, W.~Saad, M.~Bennis, and M.~Debbah, ``Mobile unmanned aerial
  vehicles ({UAV}s) for energy-efficient internet of things communications,''
  \emph{IEEE Trans. Wireless Commun.}, vol.~16, no.~11, pp. 7574--7589, Nov.
  2017.

\bibitem{xu18}
J.~Xu, Y.~Zeng, and R.~Zhang, ``{UAV}-enabled wireless power transfer:
  {T}rajectory design and energy optimization,'' \emph{IEEE Trans. Wireless
  Commun.}, vol.~17, no.~8, pp. 5092--5106, Aug. 2018.

\bibitem{huang18}
S.~Huang, Q.~Zhang, Q.~Li, and J.~Qin, ``Robust proactive monitoring via
  jamming with deterministically bounded channel errors,'' \emph{IEEE Signal
  Process. Lett.}, vol.~25, no.~5, pp. 690--694, May 2018.

\bibitem{zeng2016}
Y.~Zeng and R.~Zhang, ``Wireless information surveillance via proactive
  eavesdropping with spoofing relay,'' \emph{IEEE J. Sel. Topics Signal
  Process.}, vol.~10, no.~8, pp. 1449--1461, Dec. 2016.

\bibitem{jie2018}
J.~Xu, L.~Duan, and R.~Zhang, ``Transmit optimization for symbol-level
  spoofing,'' \emph{IEEE Trans. Wireless Commun.}, vol.~17, no.~1, pp. 41--55,
  Jan. 2018.

\bibitem{xu17}
------, ``Proactive eavesdropping via cognitive jamming in fading channels,''
  \emph{IEEE Trans. Wireless Commun.}, vol.~16, no.~5, pp. 2790--2806, May
  2017.

\bibitem{xu17surv}
------, ``Surveillance and intervention of infrastructure-free mobile
  communications: {A} new wireless security paradigm,'' \emph{IEEE Wireless
  Commun.}, vol.~24, no.~4, pp. 152--159, Aug. 2017.

\bibitem{cai17}
H.~Cai, Q.~Zhang, Q.~Li, and J.~Qin, ``Proactive monitoring via jamming for
  rate maximization over {MIMO} rayleigh fading channels,'' \emph{IEEE Commun.
  Lett.}, vol.~21, no.~9, pp. 2021--2024, Sep. 2017.

\bibitem{hu17}
D.~Hu, Z.~Qi, Y.~Ping, and J.~Qin, ``Proactive monitoring via jamming in
  amplify-and-forward relay networks,'' \emph{IEEE Signal Process. Lett.},
  vol.~24, no.~11, pp. 1714--1718, Nov. 2017.

\bibitem{haiquan19}
H.~Lu, H.~Zhang, H.~Dai, W.~Wu, and B.~Wang, ``Proactive eavesdropping in
  {UAV}-aided suspicious communication systems,'' \emph{IEEE Trans. Veh.
  Tech.}, vol.~68, no.~2, pp. 1993--1997, Feb. 2019.

\bibitem{moon19}
J.~Moon, S.~H. Lee, H.~Lee, S.~Baek, and I.~Lee, ``Deep learning-based
  proactive eavesdropping for wireless surveillance,'' in \emph{IEEE Intl.
  Conf. Commun. (ICC)}, 2019.

\bibitem{wu19oct}
Q.~Wu, W.~Mei, and R.~Zhang, ``Safeguarding wireless networks with {UAV: A}
  physical layer security perspective,'' \emph{IEEE Wireless Commun.}, vol.~26,
  no.~5, pp. 12--18, Oct. 2019.

\bibitem{qq18}
Q.~Wu, J.~Xu, and R.~Zhang, ``Capacity characterization of {UAV}-enabled
  two-user broadcast channel,'' \emph{IEEE J. Sel. Areas Commun.}, vol.~36,
  no.~9, pp. 1955--1971, Sep. 2018.

\bibitem{yali16}
R.~I.~B. Yaliniz, A.~El-Keyi, and H.~Yanikomeroglu, ``Efficient 3-{D} placement
  of an aerial base station in next generation cellular networks,'' in
  \emph{IEEE Intl. Conf. Commun. (ICC)}, 2016.

\bibitem{yang17}
P.~Yang, X.~Cao, C.~Yin, Z.~Xiao, X.~Xi, and D.~Wu, ``Proactive drone-cell
  deployment: {O}verload relief for a cellular network under flash crowd
  traffic,'' \emph{IEEE Trans. Intell. Transp. Syst.}, vol.~18, no.~10, pp.
  2877--2892, Oct. 2017.

\bibitem{lyu17}
J.~Lyu, Y.~Zeng, R.~Zhang, and T.~J. Lim, ``Placement optimization of
  {UAV}-mounted mobile base stations,'' \emph{IEEE Commun. Lett.}, vol.~21,
  no.~3, pp. 604--607, Mar. 2017.

\bibitem{alze17}
M.~Alzenad, A.~El-Keyi, F.~Lagum, and H.~Yanikomeroglu, ``3{D} placement of an
  unmanned aerial vehicle base station ({UAV-BS}) for energy-efficient maximal
  coverage,'' \emph{IEEE Wireless Commun. Lett.}, vol.~6, no.~4, pp. 434--437,
  Aug. 2017.

\bibitem{ouyang18}
J.~Ouyang, Y.~Che, J.~Xu, and K.~Wu, ``Throughput maximization for
  laser-powered {UAV} wireless communication systems,'' in \emph{IEEE Intl.
  Conf. Commun. (ICC)}, 2018.

\bibitem{derrick}
Y.~Sun, D.~W.~K. Ng, D.~Xu, L.~Dai, and R.~Schober, ``Optimal 3{D}-trajectory
  design and resource allocation for solar-powered {UAV} communication
  systems,'' \emph{IEEE Trans. Commun.}, vol.~67, no.~6, pp. 4281--4298, Jun.
  2019.

\bibitem{zeng17}
Y.~Zeng and R.~Zhang, ``Energy-efficient {UAV} communication with trajectory
  optimization,'' \emph{IEEE Trans. Wireless Commun.}, vol.~16, no.~6, pp.
  3747--3760, Jun. 2017.

\bibitem{zyong}
Y.~Zeng, J.~Xu, and R.~Zhang, ``Energy minimization for wireless communication
  with rotary-wing {UAV},'' \emph{IEEE Trans. Wireless Commun.}, vol.~18,
  no.~4, pp. 2329--2345, Apr. 2019.

\bibitem{lin2018}
X.~Lin, V.~Yajnanarayana, S.~D. Muruganathan, S.~Gao, H.~Asplund, H.~L.
  Maattanen, B.~A. Mattias, S.~Euler, and Y.~P.~E. Wang, ``The sky is not the
  limit: {LTE} for unmanned aerial vehicles,'' \emph{IEEE Commun. Mag.},
  vol.~56, no.~4, pp. 204--210, Apr. 2018.

\bibitem{Boyd}
S.~Boyd and L.~Vandenberghe, \emph{Convex Optimization}.\hskip 1em plus 0.5em
  minus 0.4em\relax Cambridge University Press, 2004.

\bibitem{guangchi}
G.~Zhang, Q.~Wu, M.~Cui, and R.~Zhang, ``Securing {UAV} communications via
  joint trajectory and power control,'' \emph{IEEE Trans. Wireless Commun.},
  vol.~18, no.~2, pp. 1376--1389, Feb. 2019.

\bibitem{szhang}
S.~Zhang, Q.~Wu, S.~Xu, and G.~Y. Li, ``Fundamental green tradeoffs:
  {P}rogresses, challenges, and impacts on 5{G} networks,'' \emph{IEEE Commun.
  Surv. Tut.}, vol.~19, no.~1, pp. 33--56, 1st Quart. 2017.

\bibitem{hu20}
S.~Hu, X.~Chen, W.~Ni, X.~Wang, and E.~Hossain, ``Modeling and analysis of
  energy harvesting and smart grid-powered wireless communication networks: {A}
  contemporary survey,'' \emph{IEEE Trans. Green Commun. Netw.}, pp. 1--36, to
  appear, Apr. 2020.

\bibitem{kokh04}
A.~Kokhanovsky, ``Optical properties of terrestrial clouds,''
  \emph{Earth-Science Reviews}, vol.~64, no.~3, pp. 189--241, Feb. 2004.

\bibitem{lee17}
J.~S. Lee and K.-H. Yu, ``Optimal path planning of solar-powered uav using
  gravitational potential energy,'' \emph{IEEE Trans. Aerosp. Electron. Syst.},
  vol.~53, no.~3, pp. 1442--1451, Jun. 2017.

\bibitem{zappone}
A.~Zappone, E.~Bjornson, L.~Sanguinetti, and E.~Jorswieck, ``Globally optimal
  energy-efficient power control and receiver design in wireless networks,''
  \emph{IEEE Trans. Signal Process.}, vol.~65, no.~11, pp. 2844--2859, Jun.
  2017.

\end{thebibliography}

\end{document}